\documentclass[12pt]{article}
\usepackage[left=2cm,right=2cm, top=2.5cm,bottom=2.5cm,bindingoffset=0cm]{geometry}
\usepackage{tikz, titlesec, tikz-cd}
\usetikzlibrary{calc,intersections,through,backgrounds}
\usepackage{amsmath, amssymb}
\usepackage{amsthm}
\usepackage{mathtools}
\usepackage{booktabs, makecell}
\usepackage{adjustbox}

\usepackage{varwidth}
\usepackage{parskip}

\mathtoolsset{showonlyrefs}

\usepackage{array, hyperref}

\titleformat{\section}{\large\bfseries\filcenter}{\thesection}{1em}{}
\titleformat{\subsection}{\bfseries}{\thesubsection}{1em}{}
\titleformat{\subsubsection}[runin]{\bfseries}{\thesubsubsection}{1em}{}[.]

\usepackage{multirow}

\usetikzlibrary{positioning}
\usetikzlibrary{decorations.text}
\usetikzlibrary{decorations.pathmorphing}
\usetikzlibrary{decorations.markings}

\usetikzlibrary{arrows} 
\tikzset{
    >=stealth',
    pil/.style={
           ->,
           ,
           shorten <=2pt,
           shorten >=2pt,}
}

\newcommand{\Z}{\mathbb{Z}}

\newcommand{\R}{\mathbb{R}}

\newcommand{\N}{\mathbb{\Z}_{\geq 0}}
\newcommand{\Id}{\mathrm{Id}}

\newcommand{\RP}{\mathbb{R}\mathrm{P}}
\newcommand{\GL}{\mathrm{GL}}
\newcommand{\PGL}{\mathrm{PGL}}
\newcommand{\D}{\mathcal D}

\newcommand{\Sh}{T}

\newcommand{\gl}{\Mat}

\newcommand{\eps}{{\varepsilon}}
\newcommand{\ep}{\eps}
\newcommand{\Ker}[1]{\mathrm{Ker}\,#1}
\newcommand{\g}{\mathfrak{G}}
\newcommand{\ord}[1]{\mathrm{ord}\,#1}
\newcommand{\sgn}{\mathrm{sgn}}
\newcommand{\Mat}{\mathrm{Mat}}
\newcommand{\ttimes}{\,\tilde \times \,}
\newcommand{\Lax}{\mathcal{L}}
\newcommand{\DO}[2]{\mathrm{DO}_{#1}(#2)}

\newcommand{\ALLDO}[1]{\mathrm{DO}_{#1}}
\newcommand{\IDO}[2]{\mathrm{IDO}_{#1}(#2)}
\newcommand{\PBDO}[2]{\mathrm{PBDO}_{#1}(#2)}
\newcommand{\PSIDO}[1]{\Psi\mathrm{DO}_{#1}}
\newcommand{\IPSIDO}[1]{\mathrm{I}\Psi\mathrm{DO}_{#1}}
\newcommand{\Tr}[1]{\mathrm{Tr}\,#1}

\newcommand{\CPM}[2]{  P(#1, #2)}

\newcommand{\Ad}{{\mathrm{Ad}}}
\newcommand{\grad}{{\mathrm{grad}\,}}
\newcommand{\Proj}{\mathbb P}
\newcommand{\pos}{{>0}}
\renewcommand{\neg}{{<0}}

\newtheorem{lemma}{Lemma}[section]
\newtheorem{proposition}[lemma]{Proposition}
\newtheorem{theorem}[lemma]{Theorem}
\newtheorem{corollary}[lemma]{Corollary}

\theoremstyle{definition}
\newtheorem{remark}[lemma]{Remark}
\newtheorem{definition}[lemma]{Definition}

\newtheorem{example}[lemma]{Example}

\makeatletter
\renewenvironment{proof}[1][\proofname] {\par\pushQED{\qed}\normalfont\topsep6\p@\@plus6\p@\relax\trivlist\item[\hskip\labelsep\bfseries#1\@addpunct{.}]\ignorespaces}{\popQED\endtrivlist\@endpefalse}
\makeatother

\usetikzlibrary{arrows} 
\tikzset{
    >=stealth',
    pil/.style={
           ->,
           ,
           shorten <=2pt,
           shorten >=2pt,}
}
\tikzset{->-/.style={decoration={
  markings,
  mark=at position .7 with {\arrow{>}}},postaction={decorate}}}
  \tikzset{a/.style={decoration={
  markings,
  mark=at position .52 with {\arrow{angle 90}}},postaction={decorate}}}
\tikzset{-<-/.style={decoration={
  markings,
  mark=at position .4 with {\arrow{<}}},postaction={decorate}}}  

\sloppy

\newcounter{ai}

\newcounter{bk}

\title{Long-diagonal pentagram maps} 

\author{Anton Izosimov\thanks{
Department of Mathematics,
University of Arizona;
e-mail: {\tt izosimov@math.arizona.edu}
} \,
and Boris Khesin\thanks{
Department of Mathematics,
University of Toronto;
e-mail: \tt{khesin@math.toronto.edu}
} }

\date{}

\begin{document}
\maketitle
\vspace*{-0.7cm}
\abstract{
The pentagram map on polygons in the projective plane was introduced by R.~Schwartz in 1992 and is by now 
one of the most popular and classical discrete integrable systems.
In the present paper we introduce and prove integrability of long-diagonal pentagram maps on polygons   in $\RP^d$, {by now the most universal pentagram-type map} encompassing all
known integrable cases. We also establish an equivalence of  long-diagonal and bi-diagonal maps
and present a simple self-contained construction of the Lax form for both. 
Finally, we prove the continuous limit of all these maps is  equivalent to the $ (2,d+1)$-KdV equation, generalizing the Boussinesq equation for $d=2$.
}


\section{Introduction}

The pentagram map is a discrete dynamical system on the space of planar polygons introduced in \cite{schwartz1992pentagram}. 
The definition of the pentagram map is illustrated in Figure~\ref{Fig1}: the image of the polygon $P$ under the pentagram map is the polygon $P'$ whose vertices are the intersection points of consecutive shortest diagonals of~$P$, i.e.  diagonals connecting second-nearest vertices.
In~\cite{ovsienko2010pentagram, soloviev2013integrability} it was shown that the pentagram map is a completely integrable system.  As of now, it is one of the most famous discrete integrable systems which features numerous connections to various areas of mathematics.  In particular, the pentagram map has an interpretation in terms of cluster algebras~\cite{GLICK20111019}, networks of surfaces \cite{Gekhtman2016}, the dimer model~\cite{affolter2019vector}, T-systems \cite{kedem2015t}, and Poisson-Lie groups \cite{fock2014loop}. Apparently the most surprising connection is to hydrodynamics: the pentagram map can be viewed as a space-time discretization of the Boussinesq equation \cite{ovsienko2010pentagram}, a shallow water approximation.

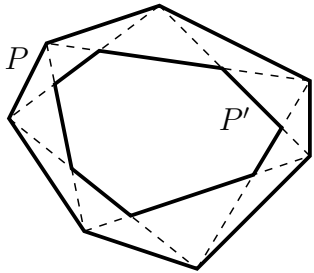
\begin{figure}[ht]
\centering
\begin{tikzpicture}[]
\coordinate (VK7) at (0,0);
\coordinate (VK6) at (1.5,-0.5);
\coordinate (VK5) at (3,1);
\coordinate (VK4) at (3,2);
\coordinate (VK3) at (1,3);
\coordinate (VK2) at (-0.5,2.5);
\coordinate (VK1) at (-1,1.5);

\draw  [line width=0.5mm]  (VK7) -- (VK6) -- (VK5) -- (VK4) -- (VK3) -- (VK2) -- (VK1) -- cycle;
\draw [dashed, line width=0.2mm, name path=AC] (VK7) -- (VK5);
\draw [dashed,line width=0.2mm, name path=BD] (VK6) -- (VK4);
\draw [dashed,line width=0.2mm, name path=CE] (VK5) -- (VK3);
\draw [dashed,line width=0.2mm, name path=DF] (VK4) -- (VK2);
\draw [dashed,line width=0.2mm, name path=EG] (VK3) -- (VK1);
\draw [dashed,line width=0.2mm, name path=FA] (VK2) -- (VK7);
\draw [dashed,line width=0.2mm, name path=GB] (VK1) -- (VK6);

\path [name intersections={of=AC and BD,by=Bp}];
\path [name intersections={of=BD and CE,by=Cp}];
\path [name intersections={of=CE and DF,by=Dp}];
\path [name intersections={of=DF and EG,by=Ep}];
\path [name intersections={of=EG and FA,by=Fp}];
\path [name intersections={of=FA and GB,by=Gp}];
\path [name intersections={of=GB and AC,by=Ap}];

\draw  [line width=0.5mm]  (Ap) -- (Bp) -- (Cp) -- (Dp) -- (Ep) -- (Fp) -- (Gp) -- cycle;

\node at (-0.9,2.3) () {$P$};
\node at (2,1.5) () {$P'$};

\end{tikzpicture}
\caption{The pentagram map.}\label{Fig1}
\end{figure}

The pentagram map admits several different integrable generalizations to multidimensional spaces, { which so far were somewhat sporadic findings. In the present paper we describe} a general construction of integrable pentagram maps in $\RP^d$ which includes all known examples as particular cases. We define the corresponding maps in two different ways: as \textit{bi-diagonal} maps based on intersection of two planes of complementary dimensions, and as \textit{long-diagonal maps}, based on intersection of several hyperplanes. The equivalence of those definitions will be proved later on in the paper.

\begin{definition}\label{def2}
Let $A_+, A_- \subset \Z$ be two disjoint finite non-empty arithmetic progressions with the same common difference, each containing at least two elements\footnote{Otherwise the constructed maps are trivial. Similarly, we exclude consecutive progressions in Definition \ref{def1}  for the same reason.}. Let also $d_\pm := |A_\pm| - 1$, and let $d:= d_+ + d_-$. Then the \textit{bi-diagonal map} associated with progressions $A_\pm$ is a self-map of the space of polygons in $\RP^d$, defined as follows. The image of the polygon $(v_k \in \RP^d)$ is a new polygon $( \tilde v_k \in \RP^d)$ whose vertices are defined by 
\begin{equation*}
\tilde v_k = \Pi_{k,+} \cap \Pi_{k,-}\,,
\end{equation*}
where the planes $\Pi_{k,\pm}$ of complementary dimensions $d_\pm$ are given by $\Pi_{k,\pm} = \mathrm{span}\, \langle v_{j+k} \mid j \in A_\pm \rangle$.
\end{definition}

\begin{example}\label{ex:standpent}
The standard pentagram map  in $\RP^2$ is a bi-diagonal map corresponding to  $A_+=\{0, 2\}$, $A_- =  \{1,3\} $.
\end{example}

\begin{example}\label{ex:p3ex}
Let $A_+ := \{1,4,7\}$ and $A_- := \{3,6\}$.
Then the corresponding bi-diagonal pentagram map in  $\RP^3$ is
\begin{align*}
  \tilde v_k=
\mathrm{span}\, \langle v_{k+1}, v_{k+4}, v_{k+7}\rangle \cap \mathrm{span}\,  \langle v_{k+3}, v_{k+6}\rangle,
\end{align*}
see Figure \ref{Fig2}.
\begin{figure}[t]
\centering
\begin{tikzpicture}[, scale = 1.2]
\coordinate (VK10) at (0.3,0.1);
\coordinate (VK9) at (0.8,-0.3);
\filldraw (VK9) circle (0.7pt);
\coordinate (VK8) at (1.5,-0.5);
\coordinate (VK7) at (2.6,-0.2);
\filldraw (VK7) circle (0.7pt);
\coordinate (VK6) at (2.8,1);
\filldraw (VK6) circle (0.7pt);
\coordinate (VK5) at (2.4,2.1);
\coordinate (VK4) at (1.5,2.8);
\filldraw (VK4) circle (0.7pt);
\coordinate (VK3) at (0.5,3);
\filldraw (VK3) circle (0.7pt);
\coordinate (VK2) at (-0.6,2.3);
\coordinate (VK1) at (-1,1.2);
\filldraw (VK1) circle (0.7pt);
\coordinate (VK) at (-0.5,0);
\filldraw (VK) circle (0.7pt);
\coordinate (VKM1) at (0,-0.5);
%
\path [name path=L17] (VK7) -- (VK1);

\draw  [line width=0.4mm]  (VK10) -- (VK9) -- (VK8) -- (VK7) -- (VK6) -- (VK5) -- (VK4) -- (VK3) -- (VK2) -- (VK1) -- (VK) -- (VKM1);

\path [name path=L36] (VK3) -- (VK6);
\path [name path=L410] (VK4) -- (VK10);
\path [name path=L47] (VK4) -- (VK7);
\path [name path=L38] (VK3) -- (VK8);
\path [name path=L06] (VK) -- (VK6);
\path [name path=L69] (VK9) -- (VK6);
\path [name path=L26] (VK6) -- (VK2);
\path [name path=L39] (VK9) -- (VK3);
\path [name intersections={of=L36 and L410,by=IP}];
\path [name intersections={of=L36 and L47,by=IP2}];
\path [name intersections={of=L38 and L06,by=IP3}];
\path [name intersections={of=L47 and L06,by=IP4}];
\path [name intersections={of=L47 and L69,by=IP5}];
\path [name intersections={of=L17 and L69,by=IP6}];
\path [name intersections={of=L17 and L39,by=IP7}];
\path [name intersections={of=L26 and L39,by=IP8}];
\draw (VK3) -- (IP);
\draw (VK6) -- (IP2);
\draw [dashed] (VK6) -- (IP5);
\draw [dashed] (IP6)-- (VK9) -- (IP7);
\draw [dashed] (IP8) -- (VK3);
\draw [dashed] (IP3) -- (VK) -- (VK3);
\draw [dashed] (IP4) -- (VK6);
\draw [fill=black, fill opacity=0.1] (VK1) -- (VK4) -- (VK7) -- cycle;
\filldraw (IP) circle (1pt);

\node[label={[shift={(-0.2,-0.1)}]right:\small$v_{k+7}$}] at (VK7) () {};
\node[label={[shift={(-0.2,0)}]right:\small$v_{k+6}$}] at (VK6) () {};
\node[label={[shift={(-0.2,0.15)}]right:\small$v_{k+4}$}] at (VK4) () {};
\node[label={[shift={(0,-0.15)}]above:\small$v_{k+3}$}] at (VK3) () {};
\node[label={[shift={(0.15,0)}]left:\small$v_{k+1}$}] at (VK1) () {};
\node[label={[shift={(0.15,0)}]left:\small$v_k$}] at (VK) () {};
\node[label={[shift={(0,0.2)}]below:\small$v_{k+9}$}] at (VK9) () {};
\node[label={[shift={(0,0.2)}]below:\small$\tilde v_k$}] at (IP) () {};

\end{tikzpicture}
\caption{Integrable bi-diagonal/long-diagonal pentagram map in $\RP^3$.
It is a bi-diagonal map corresponding to
 progressions $A_+ := \{1,4,7\}$ and $A_- := \{3,6\}$ and long-diagonal map corresponding to progressions
$B_-:=\{0,3\}$ and $B_+:=\{1\}$.
 }\label{Fig2}
 \end{figure}
\end{example}
A similar definition of pentagram maps based on two progressions appears in \cite{izosimov2018pentagram}. 
However, the maps of  \cite{izosimov2018pentagram} are defined for polygons satisfying certain additional coplanarity assumptions (see {the proof of Theorem \ref{thm:integr}} below), while the above definition is valid for any sufficiently generic polygon  in $\RP^d$.

\medskip

We will also need an alternative definition of bi-diagonal maps in terms of intersection of hyperplanes. For a polygon $(v_k)_{k \in \Z}$ in $\RP^d$ we define its \textit{$m$-diagonal hyperplanes} to be 
$$
\Pi_k := \mathrm{span}\langle v_k, v_{k+m}, \dots, v_{k+m(d-1)} \rangle.
$$
We say that two arithmetic progressions with common difference $m$ are \textit{non-consecutive} if their union is not an arithmetic progression with common difference $m$ (in particular, the two progressions are non-empty and disjoint).

\begin{definition}\label{def1} 
Given a $d$-element set $B:=B_+\cup B_-$ that is the union of two non-consecutive  arithmetic progressions $B_\pm \subset \Z$ with a common difference $m$,   
we define the {\it long-diagonal pentagram map}   on polygons
in $\RP^d$  by intersecting $m$-diagonal hyperplanes spaced out by $B$, i.e.
$$
\tilde v_k:=\bigcap\limits_{j\in B=B_+\cup B_-} \Pi_{j+k}.
$$
\end{definition}
\begin{example}[cf. Example \ref{ex:standpent}]
In this setting the standard pentagram map  in $\RP^2$ corresponds to $m=2$ and $B=\{0\}\cup \{1\} $.
\end{example}

\begin{example}[cf. Example \ref{ex:p3ex}]\label{rp3ex} 
The pentagram map in Figure \ref{Fig2} is an example of a long-diagonal map on polygons in $\RP^3$
with $m=3$, $B_-=\{0,3\}$, $B_+=\{1\}$, and $B=B_-\cup B_+=\{0,1,3\}$,  the union 
 of two arithmetic progressions with common difference $3$.
 \end{example}


We prove equivalence of Definitions~\ref{def2} and \ref{def1} in Section \ref{sect:long}: every bi-diagonal map coincides with a certain long-diagonal map, and vise versa.
\medskip

A much larger class of multidimensional pentagram maps  in $\RP^{d}$ was defined in ~\cite{khesin2013, khesin2016}
using the language of jump $J$ and intersection $I$ tuples integers. These ordered tuples specify which vertices of the polygon form a diagonal hyperplane and which diagonals to intersect to define the map. 
Although general $(J,I)$ maps are apparently not integrable, certain particular cases are. Two integrable series of $(J,I)$ maps, the so-called \textit{short-diagonal} and \textit{dented} maps are described in~\cite{khesin2013, khesin2016}. 
As can be seen from Table \ref{tab:pentmaps}, those maps are particular examples of our general construction (see also Example \ref{ex:more} below).  As a matter of fact, all previously known integrable pentagram maps are either bi-diagonal (equivalently, long-diagonal) maps or their restrictions  to suitable submanifolds. For example, the corrugated maps of \cite{Gekhtman2016} can be viewed as restrictions of dented maps of \cite{khesin2016}.


\medskip

Our main result is integrability of 
\textit{all} long-diagonal pentagram maps,  as well as a description of their continuous limit.
In particular,  we present a self-contained construction of the Lax form for bi-diagonal and long-diagonal maps. 
{ This can be regarded as a unification under one roof of all the known so far integrable cases of pentagram maps, which used to be sufficiently rare discoveries.}
\medskip


\begin{table}
$$
\scriptsize{
\left.\begin{array}{c|c|c|c|c|c}   &&&&& \\[-1em] \textbf{m} & \mathbf{A_+} & \mathbf{A_-}  &  \mathbf{B_+} & \mathbf{B_-} & \mbox{\textbf{The corresponding map}}\\[-1em] &&&&& \\ \hline 2 & \{0,2\} & \{1,3\} & \{0\} & \{1\} & \mbox{Classical pentagram map} \\\hline 1 &\{0,1\} & \{2,3\} & \{0\}& \{2\}&\mbox{Inverse pentagram map} \\\hline 1 & \{0, \dots, k\} & \{k+1, \dots, d+1\}  & \{ k-d+1, \dots, 0\}& \{2, \dots, k+1\}& \mbox{Inverse dented map in $\RP^{d}$}  \\\hline 
2 &\{0,2, \dots, 2k\} & \{1,3, \dots, 2k+1\}  &\{2-2k, 4-2k, \dots, 0\}&\{3-2k, 5-2k, \dots, 1\}& \mbox{Short-diagonal map in $\RP^{2k}$   } 
\\\hline 
2 &\{0,2, \dots, 2k\} & \{1,3, \dots, 2k-1\}  &\{4-2k, 6-2k, \dots, 0\}&\{3-2k, 5-2k, \dots, 1\}& \mbox{Short-diagonal map in $\RP^{2k-1}$    } \\

 \end{array}\right.
 }
$$
\caption{Examples of bi-diagonal/long-diagonal pentagram maps.}\label{tab:pentmaps}
\end{table}

\medskip

We prove integrability of long-diagonal maps on the space of twisted polygons. Recall that a \textit{twisted $n$-gon} in $\RP^d$ is a sequence of points $(v_i \in \RP^d)_{i \in \Z}$ such that  for every $i \in \Z$ we have $v_{i+n} = \phi(v_i)$ where $\phi$ is a projective transformation, known as the \textit{monodromy}. Clearly, long-diagonal pentagram maps take the space of twisted $n$-gons to itself and preserve the monodromy. 


\begin{theorem}\label{thm:integr}
All long-diagonal pentagram maps are completely integrable discrete dynamical systems on the space of projective equivalence classes of twisted $n$-gons in
$\RP^d$. Namely, each of those maps admits a Lax representation with spectral parameter and an invariant Poisson structure such that the spectral invariants of the Lax operator Poisson commute.
\end{theorem}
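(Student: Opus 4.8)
The plan is to prove the theorem first for \emph{bi-diagonal} maps and then transport it to long-diagonal maps via the equivalence of Definitions~\ref{def2} and~\ref{def1} established in Section~\ref{sect:long}. The initial step is to pass from polygons in $\RP^d$ to \emph{difference operators}: lift each vertex $v_k$ to a vector $V_k \in \R^{d+1}$, normalized by a determinant condition on consecutive lifts, so that projective equivalence classes of twisted $n$-gons correspond to the recurrence coefficients of an operator $V_{k+d+1} = \sum_{i=0}^{d} a_{i,k} V_{k+i}$, taken modulo the monodromy. I would then rewrite the bi-diagonal condition $\tilde v_k = \Pi_{k,+}\cap\Pi_{k,-}$ at the level of lifts: the point $\tilde V_k$ lies in $\mathrm{span}\langle V_{j+k} \mid j\in A_+\rangle \cap \mathrm{span}\langle V_{j+k}\mid j\in A_-\rangle$, so there exist coefficients with $\sum_{j\in A_+}\alpha_{j,k}V_{j+k} = \sum_{j\in A_-}\beta_{j,k}V_{j+k} = \tilde V_k$. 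This is a linear relation among the $V_{j+k}$, i.e. a difference equation, and it is precisely here that the coplanarity relations of~\cite{izosimov2018pentagram} enter the picture: the intermediate operator produced by the two progressions is governed by such relations, and one must check that the generic long-diagonal image is captured by this structure.

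The second step is the Lax form. The key observation is that the two progressions $A_+$ and $A_-$ equip the polygon's difference operator with a \emph{factorization}, and the bi-diagonal map acts by exchanging the two factors (a discrete zero-curvature move), which is the classical mechanism producing a Lax representation. I would introduce the spectral parameter $\lambda$ through the natural $\Z$-rescaling of the lifts compatible with the monodromy, replacing the shift by a $\lambda$-graded shift, so that the construction yields a Lax matrix $\Lax_k(\lambda)$ whose ordered product around the polygon is the monodromy $M(\lambda)$. Because the map is a refactorization, its effect on the monodromy is conjugation, $M(\lambda)\mapsto g(\lambda)\,M(\lambda)\,g(\lambda)^{-1}$; hence the spectrum of $M(\lambda)$, and therefore every spectral invariant, is preserved by the dynamics.

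For the Hamiltonian side I would realize the phase space inside a loop group and use the $r$-matrix (Poisson--Lie) bracket in the style of Semenov-Tian-Shansky. The pentagram map is then a refactorization map for this structure, so it is automatically Poisson and preserves the spectral invariants, while the Adler--Kostant--Symes/Semenov-Tian-Shansky theorem guarantees that these invariants Poisson commute. To conclude \emph{complete} integrability it remains to verify that the commuting family is large enough, by counting the independent spectral invariants of $M(\lambda)$ against the dimension of the symplectic leaves of the invariant bracket on the space of projective equivalence classes.

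The main obstacle is the explicit and \emph{uniform} construction of the Lax form: producing the factorization, inserting the spectral parameter, and verifying the zero-curvature/refactorization identity for arbitrary non-consecutive progressions $B_\pm$ (equivalently arbitrary $A_\pm$), rather than only for the short-diagonal and dented special cases of~\cite{khesin2013, khesin2016}. Closely tied to this is the coplanarity reduction — showing that the difference operator attached to a generic long-diagonal map carries the coplanarity structure of~\cite{izosimov2018pentagram}, so that the refactorization is available. Once the Lax matrix with spectral parameter is in hand, the Poisson-commutativity and the dimension count should follow the standard $r$-matrix template and be comparatively routine.
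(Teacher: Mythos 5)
Your outline follows the paper's strategy almost exactly: encode a polygon by the difference operator $\D=\D_++\D_-$ supported on $A=A_+\cup A_-$, identify the bi-diagonal map with the refactorization relation $\tilde\D_+\D_-=\tilde\D_-\D_+$, read off the Lax form $\Lax=\D_-^{-1}\D_+\mapsto\D_+\Lax\D_+^{-1}$ in the group of pseudo-difference operators (isomorphic to a loop group, whence the spectral parameter), invoke the Poisson--Lie $r$-matrix structure for invariance and commutativity of the spectral invariants, and transport everything to long-diagonal maps via the bijection of Section~\ref{sect:long}. However, the step you defer as the ``main obstacle'' is precisely the crux, and the argument closing it (Theorem~\ref{thm0}) is short but indispensable. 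If $\tilde\D$ is any operator supported on $A$ with $\tilde\D\,\D_+V=0$, then using $\D V=0$ one gets $(\tilde\D_-\D_+-\tilde\D_+\D_-)V=0$, and this operator is supported on the Minkowski sum $A_++A_-$. For two arithmetic progressions with the same common difference, $|A_++A_-|=|A_+|+|A_-|-1=d+1$, so the operator has exactly $d+1$ coefficient sequences; since it annihilates a generic sequence of vectors in $\R^{d+1}$, it must vanish identically, yielding the exact identity \eqref{mainRelation}. This is the only place the arithmetic-progression hypothesis enters, and without it the Minkowski sum is strictly larger and there is no exact refactorization, so the ``zero-curvature move'' you rely on would fail.

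Two smaller corrections. First, no coplanarity verification is needed for bi-diagonal maps on generic polygons in $\RP^d$: since $|A|=d+2$, the lifted vectors $V_{j+k}$, $j\in A$, are automatically linearly dependent, which is exactly what produces $\D V=0$; the $A$-corrugated coplanarity conditions of \cite{izosimov2018pentagram} live in the larger space $\RP^{D}$ with $D=\max(A)-\min(A)-1$ and enter only to explain why the polygon-to-operator map is finite-to-one rather than bijective (Remark~\ref{rem:corr}), not to validate the dynamics. Second, the paper does not carry out the dimension count you propose at the end: ``completely integrable'' is used here in the sense made precise in the statement itself (Lax representation with spectral parameter plus an invariant Poisson structure with Poisson-commuting spectral invariants), so that count, while desirable, is not required for the theorem as stated.
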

We prove this theorem in Section \ref{sect:int} based on Definition \ref{def2}. 

\medskip

Our second result is the description of the continuous limit for long-diagonal maps.

\begin{theorem}\label{thm:cont0}
The continuous limit of any long-diagonal pentagram map in dimension $d$ is the $(2, d+1)$-KdV flow of the Adler-Gelfand-Dickey hierarchy on the circle.
\end{theorem}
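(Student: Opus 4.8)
The plan is to realize the continuous limit geometrically, through the osculating flag of a nondegenerate curve, and then to recognize the resulting evolution as a flow of the Adler--Gelfand--Dickey (AGD) hierarchy. First I would replace a twisted $n$-gon by a smooth nondegenerate curve in $\RP^d$ and fix its lift $\gamma\colon\R\to\R^{d+1}$ normalized by the Wronskian condition $\det(\gamma,\gamma',\dots,\gamma^{(d)})\equiv1$. This lift satisfies a monic scalar equation $L\gamma=0$ with $L=\partial^{d+1}+u_{d-1}\partial^{d-1}+\dots+u_0$ (no subleading term, by the normalization), so that the projective class of $\gamma$ is encoded by $L$ and a flow on curves becomes a flow $\dot L=[P,L]$ on such operators. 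I then inscribe in $\gamma$ the polygon with vertices $v_k=\gamma(k\eps)$ and apply the long-diagonal map: each $m$-diagonal hyperplane $\Pi_{j+k}$ tends, as $\eps\to0$, to an osculating hyperplane of $\gamma$ at a point near $x=k\eps$, and the image polygon is inscribed in a nearby curve $\gamma_\eps$ whose $\eps$-expansion I want to compute.

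The key computation is the expansion of the intersection $\tilde v_k=\bigcap_{j\in B}\Pi_{j+k}$. Writing the osculating hyperplane at $\gamma(x+s)$ as the kernel of the covector $\xi(x+s)$ that annihilates $\gamma(x+s),\dots,\gamma^{(d-1)}(x+s)$ and is normalized by $\langle\xi,\gamma^{(d)}\rangle=1$, the Wronskian identity gives $\langle\xi(x+s),\gamma^{(i)}(x)\rangle=\tfrac{(-s)^{d-i}}{(d-i)!}+O(s^{\,d-i+2})$, the intermediate order vanishing precisely because $L$ has no $\partial^d$ term, so that the curvatures $u_l$ enter only two orders below the leading term. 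Seeking the image point as $\gamma_\eps=\sum_{i=0}^{d}b_i\gamma^{(i)}$ and imposing the $d$ incidence conditions $\langle\xi(x+\sigma_j\eps),\gamma_\eps\rangle=0$ for $j\in B$, where $\sigma_j=j+\tfrac{m(d-1)}2$ are the scaled centroid positions of the hyperplanes, I obtain at the leading order $\eps^d$ the requirement that the symbol polynomial $\sum_i b_i\tfrac{(-s)^{d-i}}{(d-i)!}$ vanish at $s=\sigma_j\eps$. This forces $b_i=\eps^i\beta_i+\dots$, with $\beta_i$ proportional to the elementary symmetric functions of the $\sigma_j$, and hence $\gamma_\eps=R_\eps\gamma$ for a differential operator $R_\eps=\beta_0+\eps\beta_1\partial+\eps^2\beta_2\partial^2+\cdots$ close to the identity; the curvatures affect the $b_i$ only at relative order $\eps^2$.

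I would then extract the flow by writing $R_\eps=\beta_0\exp(\eps a_1\partial+\eps^2 a_2\partial^2+\cdots)$. The first-order term $a_1\partial$ is an infinitesimal reparametrization of the curve and is discarded, leaving $\eps^2a_2\partial^2$ as the first genuine contribution; the constant $a_2$ equals, up to a combinatorial factor, the variance of the $\sigma_j$ over $j\in B$, which is nonzero because $B$ is a union of two non-consecutive progressions and hence has at least two distinct elements. Rescaling time by $t=\eps^2 N$, with $N$ the number of iterations, turns the map into the flow $\dot L=[P,L]$ with $P=a_2\partial^2+(\text{lower order})$. Since the pentagram map sends projective curves to projective curves, the limiting flow is automatically tangent to the space of monic normalized operators, i.e.\ $[P,L]$ has order $\le d-1$; together with preservation of the Wronskian this fixes the lower-order part of $P$ to be the AGD completion, giving $P=a_2\,(L^{2/(d+1)})_+=a_2\bigl(\partial^2+\tfrac{2}{d+1}u_{d-1}\bigr)$. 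This is exactly the generator of the $(2,d+1)$-KdV flow, with $a_2$ merely rescaling time.

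The main obstacle is the bookkeeping of the middle step: one must carry the $\eps$-expansion of the intersection far enough to separate the removable first-order reparametrization from the genuine second-order evolution, and then argue that the curvature-dependent lower-order terms of $P$ produced by the geometry coincide with those dictated by the AGD normalization rather than merely being of the same order. The delicate points are (i) choosing the lift uniformly in $\eps$ so that the Wronskian normalization survives the limit, and (ii) verifying that the combinatorial data $B_\pm$ enters only through the scalar $a_2$ and hence only rescales time --- this is the precise form of the asserted universality. As a consistency check I would specialize to the short-diagonal and inverse dented progressions of Table~\ref{tab:pentmaps}, whose continuous limits are already known to be the $(2,d+1)$-KdV flow.
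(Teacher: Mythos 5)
Your overall architecture differs from the paper's: the paper first passes to the \emph{inverse} map via the duality $T_{J,I}^{-1}=T_{I^*,J^*}$, so that the intersection tuple becomes equally spaced and the continuous limit is the \emph{envelope} of the family of diagonal hyperplanes (a computation essentially imported from \cite{khesin2016, nackan2020continuous}), and then reverses time in the resulting KdV flow. Your plan instead intersects the $d$ spaced-out hyperplanes directly and pins down the lower-order part of $P$ by the ``tangency to normalized operators forces the AGD completion'' argument; that last observation is sound and is a genuinely nice way to avoid computing the $u_{d-1}$-coefficient by hand.

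However, there is a concrete gap in the middle step: you replace each $m$-diagonal hyperplane $\Pi_{j+k}=\mathrm{span}\langle \gamma(x+(j+ml)\eps)\mid l=0,\dots,d-1\rangle$ by the osculating hyperplane of $\gamma$ at the centroid $x+\sigma_j\eps$, and this replacement is \emph{not} valid at the order $\eps^2$ at which the continuous limit is read off. The annihilating covector of the discrete hyperplane, paired with $\gamma^{(i)}(x)$, produces (up to factorials) the elementary symmetric function $e_{d-i}$ of the full multiset $\{(j+ml)\eps\}_l$, whereas the osculating covector produces $\binom{d}{i}(\sigma_j\eps)^{d-i}$; these agree for $i=d,d-1$ but already differ at $i=d-2$ by the variance of the $d$ points, which is of the \emph{same} order $\eps^{2}$ as the entry itself. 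Consequently the leading-order linear system for the $\beta_i$ is different, and so is $\beta_2$. For the classical pentagram map ($d=2$, $m=2$, $B=\{0,1\}$, $\gamma(s)=(1,s,s^2/2)$) the true chord intersection is $(1,\tfrac32\eps,\tfrac32\eps^2)+O(\eps^3)$ while the intersection of the tangent lines at the centroids is $(1,\tfrac32\eps,\eps^2)+O(\eps^3)$: the $\gamma''$-coefficients disagree, and your $a_2=\beta_2/\beta_0-\tfrac12(\beta_1/\beta_0)^2$ comes out $-\tfrac18$ instead of the correct $\tfrac38$. The discrepancy is curvature-independent, so by your own tangency argument the true flow is still of the form $a_2Q_2+a_1\partial+a_0$ with constants --- but the crucial non-vanishing $a_2\neq 0$, which you derive from the variance of the centroids $\sigma_j$, is established for the wrong map and is not inherited by the actual long-diagonal map. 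To close the gap you must run the leading-order computation with the genuine spans (the symmetric functions of the full sets $\{j+ml\}$ rather than powers of their means) and prove that the resulting second-order coefficient is nonzero for every admissible $B=B_+\cup B_-$; this is exactly the content of the paper's unproved-in-detail constant $C_{d,m,I}\neq 0$, which it obtains on the dual side via the envelope.
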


This theorem is proved in Section \ref{sec:cont}.

\par
\bigskip

{\bf Acknowledgments.}  A.I. was supported by NSF grant DMS-2008021.
B.K. was partially supported by a Simons Fellowship and an NSERC Discovery Grant.

%




%


%

\section{Integrability of bi-diagonal pentagram maps}\label{sect:int}
As above, let $A_+$, $A_-$ be two disjoint finite arithmetic progressions with the same common difference. Let also $d_\pm := |A_\pm| - 1$, $d:= d_+ + d_-$, and $A:= A_+ \cup A_-$.
Consider  a twisted $n$-gon $( v_i \in \RP^d )$ in $\RP^d$, and let $(V_i \in \R^{d+1})$ be its arbitrary quasi-periodic lift to $\R^{d+1}$ (quasi-periodicity of $V_i$ means that $V_{i+n} = MV_i$ for any $i \in \Z$ and a certain non-degenerate matrix $M$). Then, since $|A| = d+2$, the vectors $ V_{j+k}$, $ j \in A$, are linearly dependent for any $k \in \Z$:
\begin{equation}\label{eq:lindep}
\sum_{j \in A} a_{k}^{j} V_{j+k} = 0.
\end{equation}
Note that since the sequence $V_i$ is quasi-periodic, the coefficients $a_{k}^{j}$ are $n$-periodic in the index $k$. Therefore, relations \eqref{eq:lindep} can be equivalently written as
$
\D V = 0,
$
where $V$ is a bi-infinite sequence with entries $V_k$, and $\D$ is an $n$-periodic {difference operator}
\begin{equation}\label{DO}
\D := \sum_{j \in A} a^{j} \Sh^j.
\end{equation}
Here $\Sh$ is the left shift operator  on bi-infinite sequences, $(TV)_k := V_{k+1}$, while the sequences $a^{j}$  of real numbers act on sequences of vectors by term-wise multiplication: $(a^jV)_k :=  a_{k}^{j} V_k$.
Thus, one can encode polygons by {difference operators}. There is, however, more than one operator corresponding to a given polygon:

{

\begin{proposition}\label{newProp}
The $n$-periodic difference operator $\D$ associated to a polygon is unique up to a transformation
\begin{equation}\label{eq:lr}
\D \mapsto \alpha \D \beta^{-1}\end{equation}
where $\alpha, \beta$ are sequences of non-zero real numbers which are quasi-periodic with the same monodromy:
$
{\alpha_{i+n}}/{\alpha_i} = {\beta_{i+n}}/{\beta_i} = z
$
for all $i \in \Z$ and some $z \in \R^*$ independent of $i$.
\end{proposition}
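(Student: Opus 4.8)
The plan is to track the two independent sources of ambiguity in the construction of $\D$ and to show that together they generate precisely the transformations~\eqref{eq:lr}. The operator $\D$ is built from two choices: a quasi-periodic lift $(V_i)$ of the polygon $(v_i)$, and, for each $k$, a choice of coefficient vector $(a_k^j)_{j \in A}$ realizing the linear dependence~\eqref{eq:lindep}. For a sufficiently generic polygon the $d+2$ vectors $(V_{j+k})_{j \in A}$ span $\R^{d+1}$, so the space of relations~\eqref{eq:lindep} is one-dimensional and the coefficient vector is determined up to an overall nonzero scalar $\alpha_k$ depending on $k$. Thus the only freedoms are $V_i \mapsto \beta_i V_i$ for a sequence $(\beta_i)$ of nonzero reals, and $a_k^j \mapsto \alpha_k a_k^j$ for a sequence $(\alpha_k)$ of nonzero reals.

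Next I would compute how each freedom acts on the operator. Rescaling the lift by $\beta$, the relation $\sum_j (a_k^j \beta_{k+j}) (\beta_{k+j}^{-1} V_{k+j}) = 0$ shows that in the new lift the coefficient realizing the dependence is $a_k^j \beta_{k+j}^{-1}$, up to the remaining scalar freedom $\alpha_k$; hence the transformed coefficients are $(a')_k^j = \alpha_k\, a_k^j\, \beta_{k+j}^{-1}$. A direct check of the action on a test sequence $W$ gives $(\alpha \D \beta^{-1} W)_k = \alpha_k \sum_{j \in A} a_k^j \beta_{k+j}^{-1} W_{k+j} = \sum_{j \in A} (a')_k^j W_{k+j}$, so the new operator is exactly $\alpha \D \beta^{-1}$.

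Then I would impose the two well-definedness constraints. For $(\beta_i V_i)$ to again be a quasi-periodic lift, i.e. $V'_{i+n} = M' V'_i$ for a single matrix $M'$, the ratio $\beta_{i+n}/\beta_i$ must be a constant $z \in \R^*$, and then $M' = zM$. For the transformed operator to be $n$-periodic, i.e. $(a')_{k+n}^j = (a')_k^j$, the $n$-periodicity of $a^j$ forces $\alpha_{k+n}/\alpha_k = \beta_{k+n+j}/\beta_{k+j} = z$, so $\alpha$ must be quasi-periodic with the \emph{same} monodromy $z$. This establishes the uniqueness direction. Conversely, for any $\alpha,\beta$ quasi-periodic with a common $z$, the operator $\alpha \D \beta^{-1}$ arises from the rescaled lift $\beta V$ together with a valid renormalization of the dependence coefficients, hence is associated to the same polygon.

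The point requiring the most care is the interaction of the two periodicity conditions: the genuinely new content of the statement is that $\alpha$ and $\beta$ cannot carry independent monodromies but must share the same $z$, which is exactly what $n$-periodicity of $\alpha \D \beta^{-1}$ enforces once $\beta$'s monodromy is pinned down by the lift condition. I would also state the genericity hypothesis explicitly (rank $d+1$ of the spanning vectors, ensuring the one-dimensional relation space), since without it the single scalar $\alpha_k$ would not account for the full coefficient ambiguity.
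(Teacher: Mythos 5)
Your proposal is correct and follows essentially the same route as the paper's proof: the lift ambiguity gives the quasi-periodic sequence $\beta$, the one-dimensionality of the relation space among the $d+2$ vectors $(V_{j+k})_{j\in A}$ gives the row-wise scalar $\alpha$, and $n$-periodicity of the operators forces $\alpha$ to be quasi-periodic with the same monodromy as $\beta$. Your explicit remarks on the genericity hypothesis and the converse direction are points the paper leaves implicit, but the argument is the same.
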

\begin{proof}
Suppose $\D, \D'$ are two operators corresponding to the same polygon $( v_i \in \RP^d )$. Then, by construction, there exist quasi-periodic lifts $(V_i \in \R^{d+1})$, $(V_i' \in \R^{d+1})$ of $(v_i)$ such that
$
\D V = \D' V' = 0.
$
Note that since $(V_i)$, $(V_i')$ are lifts of the same polygon, we have $V_i' = \beta_i V_i$ for some sequence $\beta_i$ of non-zero real numbers. Furthermore, since both lifts are quasi-periodic, the sequence $\beta_i$ is quasi-periodic as well: $\beta_{i+n} = z \beta_i$ for  all $i \in \Z$ and some $z \in \R^*$ independent of $i$. Now we have
$
\D V = \D'\beta V = 0.
$
This forces the relation
$
\D'\beta = \alpha \D
$
for some sequence $\alpha_i$ of non-zero real numbers. So we indeed have
$
\D' = \alpha \D \beta^{-1}.
$
Furthermore, since $\D, \D'$ are both periodic, we must have that $\alpha$ is quasi-periodic, with the same monodromy as~$\beta$.
\end{proof}

Since there are several (infinitely many) difference operators corresponding to a given polygon,  any mapping of the space of polygons to itself lifts to difference operators as a correspondence (a multivalued map) rather than a map}. 
To explicitly describe this correspondence for the case of the bi-diagonal pentagram map, we split the difference operator $\D$ into two parts:
$$
\D_+ := \sum_{j \in A_+} a^{j} \Sh^j, \quad
\D_- := \sum_{j \in A_-} a^{j} \Sh^j.
$$
\begin{theorem}\label{thm0}
The bi-diagonal pentagram map, written in terms of difference operators, is a multivalued map $$\D =  \D_+ + \D_- \quad \longmapsto \quad \tilde \D =  \tilde \D_+ + \tilde \D_-$$ determined by the relation
\begin{equation}\label{mainRelation}
\tilde{\D}_+  {\D}_- = \tilde {\D}_-  {\D}_+.
\end{equation}
In other words, if a polygon $(\tilde v_k)$ is the image of a polygon $(v_k)$ under a bi-diagonal pentagram map, then the associated difference operators $\D$ and $\tilde \D$ can be chosen to satisfy \eqref{mainRelation}.
\end{theorem}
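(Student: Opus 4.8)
The plan is to give the splitting $\D=\D_++\D_-$ a geometric meaning and then to read \eqref{mainRelation} off an annihilation property of the image lift. First I would fix a quasi-periodic lift $(V_k\in\R^{d+1})$ of $(v_k)$ together with its operator $\D$, so that $\D V=0$ and hence $\D_+V=-\D_-V$. The key observation is that $(\D_+V)_k=\sum_{j\in A_+}a_k^jV_{j+k}$ lies in the $(d_++1)$-dimensional subspace projecting to $\Pi_{k,+}$, while the equal vector $-(\D_-V)_k$ lies in the $(d_-+1)$-dimensional subspace projecting to $\Pi_{k,-}$. Since these subspaces meet, for a generic polygon, in a line projecting to the point $\Pi_{k,+}\cap\Pi_{k,-}=\tilde v_k$, the common vector $\tilde V_k:=(\D_+V)_k=-(\D_-V)_k$ is a lift of $\tilde v_k$. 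As $\D_\pm$ are $n$-periodic and $V$ is quasi-periodic with monodromy $M$, the sequence $\tilde V$ is quasi-periodic with the same monodromy, so it is a quasi-periodic lift of the image polygon.

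Next I would record the easy implication, that any operators $\tilde\D_\pm$ solving \eqref{mainRelation} produce $\tilde\D=\tilde\D_++\tilde\D_-$ annihilating $\tilde V$. Substituting $\tilde V=\D_+V$ and then $\tilde\D_-\D_+=\tilde\D_+\D_-$ gives
\begin{equation*}
\tilde\D\tilde V=(\tilde\D_++\tilde\D_-)\D_+V=\tilde\D_+\D_+V+\tilde\D_-\D_+V=\tilde\D_+\D_+V+\tilde\D_+\D_-V=\tilde\D_+\D V=0 .
\end{equation*}
By Proposition \ref{newProp} and genericity, the operator associated to $(\tilde v_k)$ through the lift $\tilde V$ is unique up to the rescaling \eqref{eq:lr}; hence it suffices to exhibit a single nonzero solution $(\tilde\D_+,\tilde\D_-)$ of \eqref{mainRelation}, since any such solution then represents the image operator $\tilde\D$.

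For existence I would run a dimension count, and this is where the arithmetic-progression hypothesis is essential. Take $\tilde A_\pm$ to be progressions with the same common difference as $A_\pm$, with $|\tilde A_\pm|=|A_\pm|$, and positioned (by the evident relation between their initial terms) so that the two sumsets coincide, $\tilde A_++A_-=\tilde A_-+A_+=:S$. Because the sumset of two progressions with a common difference is again such a progression, $S$ has $|A_+|+|A_-|-1=d+1$ elements, so $\tilde\D_+\D_--\tilde\D_-\D_+$ is supported on $S$ and \eqref{mainRelation} becomes a homogeneous linear system of $(d+1)n$ equations in the $(d+2)n$ unknown coefficients of $\tilde\D_\pm$. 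With more unknowns than equations its solution space has dimension at least $n$, so nonzero solutions exist. Moreover, if a solution had $\tilde\D=\tilde\D_++\tilde\D_-=0$, then \eqref{mainRelation} would read $\tilde\D_+\D=0$, forcing $\tilde\D_+=0$ and $\tilde\D_-=0$ since a generic $\D$ has no left zero divisors; thus a nonzero solution yields $\tilde\D\neq0$, which by the previous paragraph represents the desired operator.

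I expect the main obstacle to be this converse passage from algebraic solvability to the geometric statement, that the operator produced really is the operator of the image polygon. This is exactly where the genericity assumptions must be spent: that $\D$ has order $d+1$ with invertible leading and trailing coefficients (so that the no-zero-divisor claim and the uniqueness of Proposition \ref{newProp} apply), that $\Pi_{k,+}\cap\Pi_{k,-}$ is a single point tracing a genuine polygon $(\tilde v_k)$, and that the dependency $\sum_{j\in\tilde A}\tilde a_k^j\tilde V_{j+k}=0$ is unique up to scale so that it agrees with the solution found above. Verifying that the supports $\tilde A_\pm$ chosen for the count are indeed those of the natural encoding of the image, and that no coefficient degenerates for a generic polygon, is the bookkeeping that turns the clean count into a proof.
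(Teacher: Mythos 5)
Your setup matches the paper's: the lift $\tilde V := \D_+V = -\D_-V$ of the image polygon and the Minkowski-sum count $|A_++A_-| = |A_+|+|A_-|-1 = d+1$ are exactly the two ingredients the paper uses. But you run the logic in the opposite direction, and that reversal opens a gap at the decisive step. The paper starts from an operator $\tilde\D$ that is, by construction, associated to the image polygon (so its coefficient row at every index $k$ is a nontrivial linear dependency among $\{\tilde V_{j+k}\mid j\in A\}$), observes $\tilde\D\D_+V=0$, rewrites this as $(\tilde\D_-\D_+-\tilde\D_+\D_-)V=0$ using $\D V=0$, and then kills the operator $\tilde\D_-\D_+-\tilde\D_+\D_-$ outright: it is supported on only $d+1$ integers, and a generic sequence $V$ of vectors in $\R^{d+1}$ is annihilated by no nonzero operator of support size $d+1$. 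No existence or counting argument is needed.

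You instead manufacture an abstract nonzero solution of \eqref{mainRelation} by counting $(d+1)n$ equations against $(d+2)n$ unknowns, show it annihilates $\tilde V$, and then invoke Proposition \ref{newProp} to claim that ``any such solution then represents the image operator.'' That inference does not follow: Proposition \ref{newProp} compares two operators each of which genuinely corresponds to the polygon, i.e.\ has a nonzero coefficient row at every index. A nonzero solution of your linear system that annihilates $\tilde V$ could a priori have identically zero rows at some indices $k$ (each row is only constrained to lie in the one-dimensional space of dependencies of $\{\tilde V_{j+k}\}$, and $0$ lies in every line); such a solution is not related to the associated operator by the rescaling \eqref{eq:lr}, which requires nonvanishing $\alpha$. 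You flag this nondegeneracy as bookkeeping, but it is the crux of your route, and ruling it out is not obviously easier than the theorem itself. The cleanest repair is to discard the counting argument and argue forward as the paper does: the genuinely associated operator $\tilde\D$ of the image polygon already satisfies $\tilde\D\D_+V=0$, and the support-$(d+1)$ observation applied to $(\tilde\D_-\D_+-\tilde\D_+\D_-)V=0$ then forces \eqref{mainRelation} for that operator.
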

\begin{proof}
Let $V$ be an arbitrary lift of a polygon $(v_k)$ and let $\D$ be a difference operator of the form \eqref{DO} such that $\D V = 0$. Then, by definition of the operator $\D_+$, projections of the entries of the sequence $\D_+V$ to $\RP^d$ belong to $\Pi_{k,+} = \mathrm{span}\langle v_{j+k} \mid j \in A_+ \rangle$. Likewise, projections of the entries of $\D_-V$ to $\RP^d$ are in $\Pi_{k, -} = \mathrm{span}\langle v_{j+k} \mid j \in A_- \rangle$. But $\D V = 0$ implies $\D_- V = -\D_+ V$, so the points in $\RP^d$ defined by $\D_\pm V$ are precisely the vertices of the image  $(\tilde v_k)$ of the polygon $(v_k)$ under the bi-diagonal pentagram map. 
So, as an operator $\tilde \D$ corresponding to the polygon $(\tilde v_k)$ one can take an operator  of the form~\eqref{DO} satisfying 
\begin{equation}\label{anotherRelation}
\tilde \D \D_+ V = 0.
\end{equation}
We will now show that relation \eqref{anotherRelation} implies \eqref{mainRelation}. Indeed, in view of $\D V = 0$,  relation \eqref{anotherRelation} can be rewritten as
\begin{equation}
(\tilde \D \D_+ - \tilde \D_+  \D )V = 0.
\end{equation}
But
$$
\tilde \D \D_+ - \tilde \D_+  \D = \tilde \D_- \D_+ - \tilde \D_+  \D_-,
$$
so
\begin{equation}\label{linDep}
(\tilde \D_- \D_+ - \tilde \D_+  \D_-)V = 0.
\end{equation}
Furthermore, one has $$\tilde \D_- \D_+ - \tilde \D_+  \D_-= \!\!\!\sum_{j \in A_+ + A_-} \!\!\! b^{j} \Sh^j,$$ where $b^j$ are certain $n$-periodic sequences, and $A_+ + A_-$ is the Minkowski sum. But for arithmetic progressions with the same common difference one has $$|A_+ + A_-| = |A_+| + |A_-| - 1 = d + 1.$$ So, since $V$ is a generic sequence of vectors in $\R^{d+1}$, and the operator $\tilde \D_- \D_+ - \tilde \D_+  \D_-$ annihilates $V$, that operator must be zero. The result follows. \end{proof}
{
\begin{proof}[Proof of Theorem \ref{thm:integr} ]

We first recall a result of \cite{izosimov2018pentagram} on integrability of pentagram maps on \textit{$A$-corrugated polygons}. Let $A \subset \Z$, $|A| \geq 2$ be a finite set of integers containing at least two elements, and let $D := \max(A) - \min(A) - 1$. Then a polygon $\{ v_i\}$ in $\RP^D$ is called \textit{$A$-corrugated}  if, for any $i \in \Z$, the $|A|$ points $\{ v_{i+j} \mid j \in A\}$ belong to an $|A|-2$ dimensional plane. Pentagram maps on such polygons can be defined when $A \subset \Z$ can be partitioned as $A = A_+ \sqcup A_-$, where $A_\pm \subset \Z$ are finite {arithmetic progressions with the same common difference}.
The pentagram map associated to such a partition is defined by
\begin{equation}\label{jmap}
\tilde v_i : = \mathrm{span}\langle v_{i + j} \mid j \in A_+ \rangle \cap \mathrm{span}\langle v_{i + j} \mid j \in A_-\rangle.
\end{equation}
Using the above construction, one can represent twisted $A$-corrugated polygons by periodic difference operators of the form \eqref{DO}. Moreover, in this case the correspondence between polygons and difference operators up to transformations \eqref{eq:lr} is a bijection  \cite[Proposition 3.3]{izosimov2018pentagram}. Rewritten in terms of difference operators, the pentagram map on $A$-corrugated polygons takes the same form \eqref{mainRelation}. It turns out that that  \eqref{mainRelation} is an integrable relation: it can be written in the Lax form and has an invariant Poisson structure such that the spectral invariants of the Lax operator Poisson commute. So, since bi-diagonal maps are described by the same relation  \eqref{mainRelation} (even though in this case the relation between polygons and operators is not bijective, see Remark \ref{rem:corr} below), they are also integrable. 
Namely, following \cite{izosimov2018pentagram}, consider the operators $\D_\pm$ associated to a polygon as described above. Let  $\Lax:= {\D}_-^{-1}{\D}_+$. Here $ {\D}_-^{-1}$ (and hence $\Lax$) is understood as a \textit{pseudo-difference operator}, i.e. a formal series of the form $\sum_{j =k}^\infty a^{j} \Sh^j$ where $k \in \Z$ is arbitrary, and the coefficients $a^{j}$ are periodic sequences. Note that by Proposition~\ref{newProp}, the operator $\Lax$ is defined up to conjugation by a quasi-periodic sequence.
Now, rewriting~\eqref{mainRelation} as
\begin{equation}\label{refactRelation}
\tilde{\D}_-^{-1}\tilde{\D}_+ = \D_+  \D_-^{-1},
\end{equation}
we see that $\Lax:= {\D}_-^{-1}{\D}_+$ transforms as
\begin{equation}\label{Lax}
\Lax = {\D}_-^{-1}{\D}_+ \mapsto \tilde \Lax = \D_+  \Lax \D_+^{-1},
\end{equation}
thus providing a Lax form for relation  \eqref{mainRelation}. A Poisson structure preserved by this map comes from a Poisson-Lie structure on the group of pseudo-difference operators, while Poisson-commutativity of spectral invariants of $\Lax$ with respect to that  structure is a direct consequence of the Poisson-Lie property, see \cite{izosimov2018pentagram} for details. 
\end{proof}
}
\begin{remark}
The group of invertible pseudo-difference operators is isomorphic to the loop group of invertible $n \times n$ matrices over the field $\R((z))$ of formal Laurent series  \cite[Remark 3.8]{izosimov2018pentagram}. Therefore, \eqref{Lax} can be seen as a Lax representation with spectral parameter.
\end{remark}
\begin{remark}\label{rem:corr}
Note that the map taking a polygon to the associated Lax operator $\Lax:= {\D}_-^{-1}{\D}_+$ (defined up to conjugation by scalar sequences) is not a bijection but (generically) a finite-to-one map. Thus, while operator dynamics \eqref{refactRelation} and equivalent Lax dynamics \eqref{Lax} are genuine integrable systems, the bi-diagonal pentagram dynamics should instead be viewed as a ramified covering of an integrable system  (in particular, the Poisson structure on Lax operators may become singular when lifted to polygons). This covering is best understood geometrically, via a description of  \eqref{refactRelation} as a pentagram map on {$A$-}corrugated polygons {(see the proof of Theorem \ref{thm:integr} above)}.
It is shown in  \cite{izosimov2018pentagram} that this map is equivalent to dynamics  \eqref{mainRelation}. At the same time, bi-diagonal pentagram maps can be viewed as lifts  of~\eqref{mainRelation} to a certain covering space. This covering can be described using a projection. Namely, any projection
$
\RP^{D} \to \RP^d,
$
where $D := \max(A) - \min(A) - 1$ and $d: = |A| - 2$,
intertwines the pentagram dynamics on $A$-corrugated polygons in $\RP^{D}$ with the bi-diagonal pentagram map on generic polygons in $ \RP^{d}$. Note though that a projection of a twisted polygon is, generally speaking, not a twisted polygon, unless we project from a subspace fixed by the monodromy. Since generic monodromies have finitely many invariant subspaces, there are generically finitely many polygons in $\RP^{d}$ corresponding to a given $A$-corrugated polygon in $\RP^{ D} $. In other words, we have proved the following statement.

\begin{proposition}
There is a finite ramified covering
$$
\left\{ {\begin{varwidth}{6cm}\centering projective equivalence classes of generic polygons in $\RP^{d}$\end{varwidth}} \right\} \to \left\{ {\begin{varwidth}{6cm}\centering projective equivalence classes of $A$-corrugated polygons in $ \RP^{D}$\end{varwidth}}.\right\}
$$
This covering intertwines the pentagram map on $A$-corrugated polygons with the corresponding bi-diagonal pentagram map on generic polygons.
\end{proposition}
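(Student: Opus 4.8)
The plan is to realize the covering through the difference operator $\D$ that the two polygon spaces share. Starting from a generic polygon $(v_k)$ in $\RP^d$, Proposition~\ref{newProp} attaches to it a difference operator $\D=\sum_{j\in A}a^j\Sh^j$ of the form~\eqref{DO}, well defined up to the equivalence~\eqref{eq:lr}. By the bijection between such operators (modulo~\eqref{eq:lr}) and $A$-corrugated polygons established in \cite{izosimov2018pentagram}, this $\D$ determines a \emph{unique} $A$-corrugated polygon in $\RP^D$, which I take to be the image of $(v_k)$ under the covering. Concretely, I would build this polygon from the scalar solution space $S:=\{x:\D x=0\}$: since $\max(A)-\min(A)=D+1$ one has $\dim S=D+1$, and the evaluation functionals $U_k\in S^*$, $U_k(x):=x_k$, satisfy $\sum_{j\in A}a_k^j U_{k+j}=0$, so the $|A|=d+2$ points $[U_{k+j}]$, $j\in A$, lie in a $(|A|-2)$-plane; hence $([U_k])$ is an $A$-corrugated polygon in $\RP(S^*)=\RP^D$. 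One checks this is insensitive to the representative of~\eqref{eq:lr}: left multiplication leaves $S$ unchanged, and right multiplication merely rescales the lift $U_k$, leaving $([U_k])$ fixed.

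Next I would describe the fibers via projection. Any quasi-periodic lift $V$ of a generic polygon with operator $\D$ solves $\D V=0$, and since each coordinate of $V_k$ is a scalar solution, one may write $V_k=L(U_k)$ for a fixed linear map $L\colon S^*\to\R^{d+1}$; thus $(v_k)$ is the image of $([U_k])$ under the projection $\RP^D\dashrightarrow\RP^d$ induced by $L$, with center $\RP(\ker L)$ of dimension $D-d-1$. This exhibits every point of the fiber as a projection of the canonical $A$-corrugated polygon, recovering the geometric picture of Remark~\ref{rem:corr}. The twistedness constraint enters here: writing $\mu:=\Sh^n\colon S\to S$ for the monodromy of the solution space (well defined since $\D$ is $n$-periodic), one has $U_{k+n}=\mu^*U_k$, so $([L(U_k)])$ is a twisted $n$-gon if and only if $\ker L$ is invariant under the dual monodromy $\mu^*\colon S^*\to S^*$, while two maps $L$ with the same kernel produce projectively equivalent polygons.

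Finiteness and the covering structure then follow from the linear algebra of $\mu^*$. For a generic polygon the monodromy $\mu^*$ is diagonalizable with distinct eigenvalues, so its $\mu^*$-invariant subspaces of dimension $D-d$ are exactly the coordinate subspaces in the eigenbasis, of which there are $\binom{D+1}{D-d}$: finitely many, nonempty, and generically yielding distinct projective classes. Hence the fiber over a generic $A$-corrugated polygon is finite and nonempty, so the map is a finite ramified covering, the ramification locus being the codimension-one set where $\mu^*$ acquires repeated eigenvalues or where distinct invariant subspaces project to coinciding classes. The intertwining is then immediate: by Theorem~\ref{thm0} the bi-diagonal map is the operator dynamics~\eqref{mainRelation}, and the pentagram map on $A$-corrugated polygons is governed by the \emph{same} relation~\eqref{mainRelation}; since the covering preserves $\D$ by construction (and, geometrically, a projection carries the span-and-intersect formula~\eqref{jmap} to the bi-diagonal formula of Definition~\ref{def2}), the two dynamics correspond under the covering.

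I expect the main obstacle to be the ramification analysis of the third step: verifying that over an open dense locus the $\binom{D+1}{D-d}$ invariant subspaces give genuinely distinct, generic, twisted polygons in $\RP^d$ (so that the covering degree is constant there), and pinning down precisely the ramification locus where eigenvalues collide or invariant subspaces degenerate. Everything else reduces to the solution-space bookkeeping above together with the already-established bijection of \cite{izosimov2018pentagram}.
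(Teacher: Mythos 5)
Your proposal is correct and follows essentially the same route as the paper: the paper also realizes the covering via the shared difference operator (equivalently, the Lax operator $\Lax = \D_-^{-1}\D_+$ up to conjugation) and describes the fibers as projections $\RP^D \dashrightarrow \RP^d$ from monodromy-invariant centers, with finiteness coming from a generic monodromy having only finitely many invariant subspaces, and the intertwining coming from both dynamics being governed by the same relation~\eqref{mainRelation}. Your write-up merely supplies more of the linear-algebra bookkeeping (the evaluation-functional construction of the $A$-corrugated polygon from $\Ker\D$ and the count $\binom{D+1}{D-d}$) that the paper leaves implicit.
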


Note that in terms of difference operators, this covering is just the map sending a polygon to the associated Lax operator  $\Lax:= {\D}_-^{-1}{\D}_+$ {(defined up to conjugation by scalar sequences)}.


%
\end{remark}

\medskip


\section{Equivalence of bi-diagonal and long-diagonal pentagram maps}\label{sect:long}

%
%

We start by recalling the definitions from the introduction in slightly more detail.

\begin{definition} Given a positive integer $m$ and a (twisted) $n$-gon $(v_k)_{k \in \Z}$ in $\RP^d$  its \textit{long-diagonal (or $m$-diagonal) hyperplanes} are 
$$
\Pi_k := \mathrm{span}\langle v_k, v_{k+m}, \dots, v_{k+m(d-1)} \rangle.
$$
Next, fix a $d$-tuple $B:=B_+\cup B_-$ that is the union of two non-consecutive arithmetic progressions $B_\pm$ of integers with a common difference $m$,   $|B|=|B_+|+|B_-|=d$.
Then the {\it long-diagonal pentagram map} is defined as follows: 
the vertex $\tilde v_k$ of the image of the polygon $(v_k)_{k \in \Z}$ under that map is given by 
intersecting $m$-diagonal hyperplanes spaced out by $B$, i.e.
$$
\tilde v_k:=\bigcap\limits_{j\in B=B_+\cup B_-} \Pi_{j+k}.
$$
\end{definition}

\begin{remark} 
As we mentioned above, the   usual pentagram map  in $\RP^2$ corresponds to $m=2$ and $B=\{0\}\cup \{1\} $.

For $m=2$ and $B=\{0\}\cup \{1,3\} $ we recover the pentagram map in $\RP^3$, which is known to be 
numerically integrable, see map \#4 from the table in \cite[Section 6]{khesin2015}. 
Now its integrability is established since it is a long-diagonal pentagram map. 

The pentagram map in Figure \ref{Fig2} is an example 
of a long-diagonal map on polygons in $\RP^3$
with $m=3$ and $B=\{0,3\}\cup \{1\}$. 
\end{remark}

\begin{example}\label{ex:more} 
The  short-diagonal pentagram map  in $\RP^d$ corresponds to $m=2$ and $B=B_+\cup B_- $ with
$B_\pm$ being arithmetic $2$-progressions starting, respectively, at $0$ and $1$, and
of length ``half-$d$". More precisely, for even dimension $d=2\ell$ one sets 
$B_-=\{0,2,...,2\ell-2\}$ and $B_+=\{1,3,..., 2\ell-1\}$ (so that $|B_-|=|B_+|=\ell$), while for odd
$d=2\ell+1$ one sets $B_-=\{0,2,...,2\ell\}$ and $B_+=\{1,..., 2\ell-1\}$ (so that $|B_-|=\ell+1$ and $|B_+|=\ell$). Since in both cases $B=\{0,..,d-1\}$, the above 
definition is equivalent to the more customary definition of the short-diagonal map 
$$
\tilde v_k:=\bigcap\limits_{j=0}^{d-1} \Pi_{j+k}\,,
$$
see  \cite{khesin2013}. Note that the sets $B_\pm$ used to describe the short-diagonal maps in Table \ref{tab:pentmaps} are different from these by a shift, which leads to a shift of indices in the definition of the map.  
\end{example}
We now show that every bi-diagonal map is a long-diagonal map and vice versa. Fix a positive integer $m$ and let
\begin{gather}
\mathcal A_m =\left \{ \parbox{10cm}{\centering pairs of disjoint arithmetic progressions $A_\pm$ with common difference $m$, each containing at least two elements}\right\},\\
\mathcal B_m = \left \{ \parbox{10cm}{\centering pairs of non-consecutive arithmetic progressions $B_\pm$ with common difference $m$}\right\}.
\end{gather}

\begin{proposition}\label{prop:ldtoij}
There is a bijection $\phi \colon \mathcal A_m \to \mathcal B_m$ such that the bi-diagonal map associated with $A_\pm$ coincides with the long-diagonal map associated with $\phi(A_\pm)$.

\end{proposition}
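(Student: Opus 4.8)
The plan is to write down $\phi$ explicitly and then verify, in turn, that the two maps agree, that $\phi$ lands in $\mathcal B_m$, and that it is a bijection. Given $(A_+, A_-) \in \mathcal A_m$, write $d_\pm = |A_\pm| - 1$ and $a_\pm = \min A_\pm$, and define $\phi(A_+, A_-) = (B_+, B_-)$ by letting $B_\pm$ be the arithmetic progression with common difference $m$, largest element $a_\pm$, and $d_\mp$ terms; explicitly $B_+ = \{a_+ - (d_- - 1)m, \dots, a_+\}$ and $B_- = \{a_- - (d_+-1)m, \dots, a_-\}$. Note the cross-over in lengths, $|B_\pm| = d_\mp$, so that $|B| = d_+ + d_- = d$, as required by Definition~\ref{def1}.

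The geometric heart is a single lemma: for a generic polygon, $\Pi_{k,+} = \bigcap_{j \in B_+} \Pi_{j+k}$ and likewise $\Pi_{k,-} = \bigcap_{j \in B_-}\Pi_{j+k}$, where the $\Pi_{j+k}$ are the $m$-diagonal hyperplanes. To prove it I would first determine, for generic $(v_i)$, exactly which $m$-diagonal hyperplanes $\Pi_\ell$ contain the $d_+$-plane $\Pi_{k,+} = \mathrm{span}\langle v_{j+k} \mid j \in A_+\rangle$: since generically $v_i \in \Pi_\ell$ iff $i \in \{\ell, \ell + m, \dots, \ell + (d-1)m\}$, the containment $\Pi_{k,+}\subseteq \Pi_\ell$ is equivalent to the length-$(d_++1)$ window $A_+ + k$ sitting inside the length-$d$ window of $\Pi_\ell$, which pins $\ell - k$ down to exactly the progression $B_+$. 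Each of these $d_-$ hyperplanes contains $\Pi_{k,+}$, so their intersection does too; to see the intersection is no larger I would pass to the quotient $\R^{d+1}/\mathrm{span}\,\Pi_{k,+}$, in which the $d_-$ hyperplanes descend to $d_-$ hyperplanes in a space of projective dimension $d_- - 1$, whose common intersection is empty for a generic polygon. Equality of the two maps is then immediate: $\tilde v_k = \Pi_{k,+}\cap \Pi_{k,-} = \bigl(\bigcap_{j \in B_+}\Pi_{j+k}\bigr) \cap \bigl(\bigcap_{j \in B_-}\Pi_{j+k}\bigr) = \bigcap_{j \in B}\Pi_{j+k}$.

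It remains to check that $\phi$ is a bijection onto $\mathcal B_m$, and this combinatorial bookkeeping is where the real care is needed. The inverse is the evident one: from $(B_+, B_-)$ recover $a_\pm = \max B_\pm$ and $d_\pm = |B_\mp|$, and set $A_\pm = \{a_\pm, a_\pm + m, \dots, a_\pm + d_\pm m\}$. The content is the equivalence ``$A_+, A_-$ are disjoint $\iff$ $B_+, B_-$ are non-consecutive.'' If $a_+ \not\equiv a_- \pmod m$ then both sides hold automatically (different residue classes), so one reduces to the case $a_+ \equiv a_- \pmod m$; dividing out the common difference and translating, the four progressions become integer intervals $A_\pm \leftrightarrow [p_\pm, p_\pm + d_\pm]$ and $B_\pm \leftrightarrow [p_\pm - d_\mp + 1, p_\pm]$, and I would check directly that the $A$-intervals are disjoint precisely when the $B$-intervals are separated by a genuine gap, i.e. when their union fails to be a single interval (equivalently, a single progression with difference $m$). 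Along the way I would record that two same-residue progressions that overlap always union to a progression, which both justifies the parenthetical ``non-empty and disjoint'' in the definition of non-consecutive and shows $B_+ \cap B_- = \emptyset$, so that indeed $|B| = d$.

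The main obstacle I anticipate is not conceptual but this last step's casework: getting the interval arithmetic exactly right, in particular handling the two ways the $A$-intervals can be disjoint and confirming that the boundary case of touching $A$-intervals corresponds precisely to adjacent, hence consecutive, $B$-intervals, so that the equivalence is a genuine biconditional and $\phi$ is a bijection rather than merely an injection. The genericity invoked in the geometric lemma (that the descended hyperplanes meet emptily in the quotient) is routine but should be stated explicitly, in line with the standing ``sufficiently generic polygon'' hypothesis used throughout.
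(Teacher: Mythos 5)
Your map $\phi$ is literally the paper's: the progression with largest element $\min A_\pm$, common difference $m$, and $|A_\mp|-1$ terms is exactly the paper's $B_\pm = \{\min(A_+)+\min(A_-)-\max(A_-)+m,\dots,\min(A_+)\}$ (and its mirror), and your overall structure — explicit bijection, the identity $\Pi_{k,\pm}=\bigcap_{j\in B_\pm}\Pi_{j+k}$, and the equivalence ``$A_\pm$ disjoint $\iff$ $B_\pm$ non-consecutive'' — is the paper's proof. The only (correct, immaterial) differences are in the routine verifications: you reduce the combinatorics to interval arithmetic after splitting on residues mod $m$ where the paper uses the reflection $f(i)=\min(A_+)+\min(A_-)-i$, and you prove the span identity by containment plus a dimension count in the quotient where the paper directly intersects the index windows.
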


In the proof below we  realize each of the two diagonals given by  $A_\pm$ by a unique appropriate intersection of long-diagonal hyperplanes.

\begin{proof}
Let $\phi(A_\pm) := B_\pm$, where $B_\pm$ are progressions
\begin{equation}\label{eq:bfroma}
\begin{gathered}
B_+ :=  \{\min(A_+) + \min(A_-) - \max(A_-) + m,  \dots, \min(A_+) \}, \\
B_- :=  \{\min(A_-) + \min(A_+) - \max(A_+) + m,  \dots, \min(A_-) \}.
\end{gathered}
\end{equation}
We first prove that the progressions $B_\pm$ are non-consecutive, so $\phi$ is well-defined. Note that $
|B_\pm|  = |A_\mp| -1,
$
so $|A_\pm| \geq 2$ implies $B_\pm \neq \emptyset$.  Also observe that for  a bijection $f \colon \Z \to \Z$ given by \begin{equation}\label{eq:fmap}f(i) = \min(A_+) + \min(A_-) - i\end{equation} we have $f(B_+) \subset A_-$ and $f(B_-) \subset A_+$. So, since $A_+ \cap A_- = \emptyset$, it follows that $B_+ \cap B_- = \emptyset$.  So, since $B_\pm$ are non-empty and disjoint it suffices to show that $\max B_+ + m \neq \min B_-$ and $\max B_- + m \neq \min B_+$. Observe that
\begin{gather}
\max B_+ + m = \min B_-  \quad \Leftrightarrow \quad  \min(A_+) + m = \min(A_-) + \min(A_+) - \max(A_+) + m \\ \Leftrightarrow \quad  \max A_+ = \min A_-.
\end{gather}
So, $A_+ \cap A_- = \emptyset$ implies $\max B_+ + m \neq \min B_-$. Likewise, we get $\max B_- + m \neq \min B_+$. Thus, $B_\pm$ are indeed non-consecutive, as needed.

We now show that $\phi$ is a bijection. From \eqref{eq:bfroma} we find that $A_\pm$ must be of the form
\begin{equation}\label{eq:afromb}
\begin{gathered}
A_+ =  \{\max(B_+),  \dots, \max(B_+) + \max(B_-) - \min B_- + m \}, \\
A_- =  \{\max(B_-),  \dots, \max(B_+) + \max(B_-) - \min B_+ + m \},
\end{gathered}
\end{equation}
so $\phi$ is injective. To prove surjectivity we need to show that progressions \eqref{eq:afromb} are disjoint and satisfy $|A_\pm| \geq 2$. The latter is straightforward since  $|A_\pm|  = |B_\mp| + 1 \geq 2$. To prove the former consider the progressions
$f(B_\pm)$ where $f$ is defined by \eqref{eq:fmap}. Since $B_\pm$ are disjoint, so are $f(B_\pm)$. Therefore, to show that $A_\pm = f(B_\mp) \cup \{ \max(B_+) + \max(B_-) - \min B_\mp + m\}$ are disjoint it suffices to show that \begin{gather}\label{eq:notin1} \max(B_+) + \max(B_-) - \min B_- + m \notin f(B_+), \\ \label{eq:notin2}  \max(B_+) + \max(B_-) - \min B_+ + m \notin f(B_-).\end{gather}
Note that $\max(B_+) + \max(B_-) - \min B_- \in f(B_-)$, so if $\max(B_+) + \max(B_-) - \min B_- + m $ is in $f(B_+)$, then it must be the minimal element of that set. In that case we have
$$
 \max(B_+) + \max(B_-) - \min B_- + m = \min f(B_+) = f(\max B_+) = \max B_-,
$$
so $\max B_+ + m =  \min B_- $, which contradicts $B_\pm$ being nonconsecutive. So, \eqref{eq:notin1} is proved. The proof of \eqref{eq:notin2} is analogous. Therefore, $\phi$ is indeed a bijection.

Finally, we show that for $A_\pm$ and $B_\pm$ related by \eqref{eq:afromb}, the bi-diagonal map defined by $A_\pm$ coincides with the long-diagonal map defined by $B_\pm$.
By definition, the long-diagonal pentagram map associated with $B_\pm$  is given by
$$
\tilde v_k:=\bigcap_{j \in B_+ \cup B_-}\Pi_{j+k}\,,
$$
where $\Pi_k = \mathrm{span}\,\langle v_k, v_{k+m}, \dots, v_{k+ m(d-1)} \rangle$.
Furthermore, we have
\begin{gather}
\bigcap\nolimits_{j \in B_+}\Pi_{j+k} = \bigcap\nolimits_{j \in B_+}  \mathrm{span}\,\langle v_{i + j +k} \mid i = 0, m, \dots, m(d-1) \rangle \\ = \mathrm{span}\,\langle v_{j+k}\mid j = \max(B_+), \max(B_+) + m, \dots, \min(B_+) +  m(d-1)\rangle  \\ = \mathrm{span}\,\langle v_{j+k}\mid j   \in A_+\rangle.
\end{gather}
Likewise,
$$
\bigcap_{j \in B_-}\Pi_{j+ k} = \mathrm{span}\,\langle v_{j+k}\mid j   \in A_-\rangle.
$$
So indeed the long-diagonal pentagram map determined by $B_\pm$ coincides with the bi-diagonal pentagram map
associated with~$A_\pm$.
\end{proof}

\begin{corollary} \label{cor2}
All long-diagonal pentagram maps are completely integrable discrete dynamical systems on the space of projective equivalence classes of twisted $n$-gons in $\RP^d$. 
\end{corollary}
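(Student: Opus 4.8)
The plan is to obtain this corollary by combining the two principal results already in place: the integrability of bi-diagonal maps established in the proof of Theorem~\ref{thm:integr}, and the identification of bi-diagonal with long-diagonal maps furnished by Proposition~\ref{prop:ldtoij}. Since all the genuine analytic content—the Lax form \eqref{Lax} with spectral parameter, the refactorization relation \eqref{refactRelation}, and the invariant Poisson structure inherited from the Poisson-Lie group of pseudo-difference operators—is already contained in the proof of Theorem~\ref{thm:integr}, the remaining work is purely formal, and I expect no substantive obstacle. The only point that deserves care is to make sure the equivalence of Proposition~\ref{prop:ldtoij} is used at the level of actual self-maps of the space of twisted $n$-gons, and that its surjectivity is what guarantees \emph{every} long-diagonal map is covered.

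First I would record that, although Theorem~\ref{thm:integr} is stated for long-diagonal maps, its proof in Section~\ref{sect:int} in fact establishes complete integrability of every bi-diagonal map in the sense of Definition~\ref{def2}: the dynamics is governed by relation \eqref{mainRelation}, which admits the Lax representation \eqref{Lax} built from $\Lax:=\D_-^{-1}\D_+$, and the spectral invariants of $\Lax$ Poisson-commute with respect to the induced structure. Next I would fix the common difference $m$ of the two progressions $B_\pm$ defining an arbitrary long-diagonal map, so that $B_\pm \in \mathcal B_m$, and invoke Proposition~\ref{prop:ldtoij}: it supplies a bijection $\phi\colon \mathcal A_m \to \mathcal B_m$ for which the bi-diagonal map associated with $A_\pm$ coincides, as a self-map of twisted $n$-gons, with the long-diagonal map associated with $\phi(A_\pm)$.

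Finally, I would assemble the conclusion. Because $\phi$ is surjective, the given pair $B_\pm$ equals $\phi(A_\pm)$ for the pair $A_\pm = \phi^{-1}(B_\pm)\in\mathcal A_m$ recovered explicitly by \eqref{eq:afromb}; hence the long-diagonal map attached to $B_\pm$ is literally the bi-diagonal map attached to $A_\pm$, which is completely integrable by the proof of Theorem~\ref{thm:integr}. Since every long-diagonal pentagram map is defined by some pair of non-consecutive progressions with a common difference $m$, and thus arises from some $A_\pm\in\mathcal A_m$ through $\phi$, all such maps inherit the Lax representation and the Poisson-commuting spectral invariants, giving complete integrability on the space of projective equivalence classes of twisted $n$-gons in $\RP^d$, as claimed.
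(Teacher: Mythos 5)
Your proposal is correct and is precisely the argument the paper intends: Theorem~\ref{thm:integr} is proved for bi-diagonal maps via Definition~\ref{def2}, and the surjectivity of the bijection $\phi$ from Proposition~\ref{prop:ldtoij} transfers integrability to every long-diagonal map. The paper leaves this assembly implicit, so your write-up simply makes the same two-step deduction explicit.
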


The above definition of long-diagonal maps (along with their duals, defined below), as well as their restrictions to subclasses of polygons, generalizes all previously known examples of integrable pentagram maps \cite{ovsienko2010pentagram, Gekhtman2016, khesin2013, khesin2016}: short-diagonal, dented, 
deep dented, corrugated and partially corrugated ones.

\medskip

We recall the above mentioned duality in a somewhat more general setting, which we will also need for the 
continuous limit discussed below.
\medskip

\begin{definition}[cf. \cite{khesin2016}]
 Let $P = (p_1, \dots, p_d)$ be a strictly monotonic (increasing or decreasing) $d$-tuple of  integers (a \textit{jump sequence}). Given a (twisted) $n$-gon $(v_k)_{k \in \Z}$ in $\RP^d$, 
 its \textit{diagonal hyperplanes associated with $P$} are defined by
$$
\Pi_k := \mathrm{span}\langle v_{k+p_1}, ..., \dots, v_{k +p_d} \rangle.
$$
Let also $Q= (q_1, \dots, q_d)$  be another monotonic $d$-tuple of  integers (an \textit{intersection sequence}). Then the pentagram map $T_{P,Q}$ is defined as follows: the vertex $\tilde v_k$ of the image of the polygon $(v_k)_{k \in \Z}$ under $T_{P,Q}$ is given by
$$
\tilde v_k:=\Pi_{k+q_1}\cap  ...\cap \Pi_{k+q_d}.
$$
Modulo a shift of indices, the map $T_{P,Q}$ is defined by the mutual differences between elements of $P$ and of $Q$,
and this is how such maps have been usually defined. 
Namely, let $P$ and $Q$ be strictly increasing $d$-tuples. 

Now we set  $J = (j_1, \dots, j_{d - 1})$ to be a $(d-1)$-tuple of positive integers (a \textit{jump tuple})
defined as $j_l:=p_{l+1}-p_l$ for $l=1,...,d-1.$ In other words, the corresponding $P$-diagonal hyperplane is 
$$
\Pi_k := \mathrm{span}\langle v_k, v_{k+j_1}, \dots, v_{k + j_1+ \dots + j_{d-1}} \rangle.
$$
Similarly, the $(d-1)$-tuple  $I= (i_1, \dots, i_{d - 1})$   of positive integers (an \textit{intersection tuple})
corresponds to the $d$-tuple $Q$ as follows: $i_l:=q_{l+1}-q_l$ for $l=1,...,d-1.$ 
Then the pentagram map $T_{I,J}=T_{P,Q}$ (modulo index shifts) is 
$$
\tilde v_k:=\Pi_k\cap \Pi_{k+i_1}\cap...\cap \Pi_{k+i_1+...+i_{d-1}}.
$$
\end{definition}

In the following example we relate these two different systems of notations.

\begin{example}
$a)$ The {\it standard pentagram map}  in $\RP^2$ corresponds to $P=(0,2)$, $Q=(0,1)$, while $J = (2)$ and  $I= (1)$.

$b)$ The {\it long-diagonal pentagram maps} in $\RP^d$ correspond to the jump sequence 
$P=(0,m,\dots, m(d-1))$ and the intersection sequence $Q=B_-\cup B_+$, the union of two arithmetic $m$-sequences. 
Here $J = (m,\dots,m)$, while the $(d-1)$-intersection tuple $I$, standing for differences between consecutive elements of $B_+ \cup B_-$  can assume various forms, e.g.  $I=(m,\dots,m, p, m,\dots,m)$ for any $p\in \Z_\pos$, or 
$I=(m,..,m,q,m-q,q, m-q,m,...m)$, or  $I=(m,..,m,q,m-q,q, m-q,q,m,...m)$ for any positive integer $q<m$, or parts thereof.

$c)$ The case of $m=1$ with $P=(0,1,\dots, d-1)$, $Q=(0,1,\dots,q, q+p, q+p+1, \dots,d+p-2)$, i.e. 
$J=(1,\dots,1)$ and $I=(1,...,1,p,1,...,1)$,  corresponds to {\it dual (deep) dented maps} on generic polygons and can be restricted to produce pentagram maps on corrugated polygons.

$d)$ The {\it short-diagonal pentagram maps}  in $\RP^d$ correspond to $m=2$, $P=(0,2,\dots,2(d-1))$, $Q=(0,1,\dots,d-1)$, while $J = (2, \dots, 2)$ and  $I= (1, \dots, 1)$, see \cite{khesin2013}.

$e)$  For $m=3$ the long-diagonal pentagram map  in $\RP^3$ discussed in Examples  \ref{ex:p3ex} and \ref{rp3ex} (see Figure \ref{Fig2}) has $P=(0,3,6)$ and $Q=(0,1,3)$. The corresponding jump and intersection 2-tuples are respectively 
$ J=(3,3) $ and $I=(1,2)$, the differences between the entries of $P$ and $Q$. The integrability of this map 
is now established by Corollary \ref{cor2}. Note that in \cite{khesin2015} the numerical {\it non-integrability} was observed for a pentagram map $T_{J,I}$ in $\RP^3$ with 
$J=(3,3)$ and $I=(1,1)$. It turns out, however, that for $J=(3,3)$ and $I=(1,2)$ the map becomes integrable!
\end{example}

\begin{definition}
Given $d$-tuples $P$ and $Q$, we say that the pentagram map $T_{Q,P}$ {\it is dual to} $T_{P,Q}$.
Similarly, $T_{I,J}$  {\it is dual to} $T_{J,I}$.
\end{definition}

\begin{remark}\label{rem:duality}
There is an important relation between dual pentagram maps that allows one to interchange $d$-tuples $P$ and $Q$ (and, respectively, $J$ and $I$). Namely, for $P=(p_1,\dots,p_d)$ we 
denote by $P^*$ the $d$-tuple of integers read backwards: $P^*:=(p_d,\dots, p_1)$. Similarly, for 
$J=(j_1,\dots,j_{d-1})$ we set $J^*:=(j_{d-1},\dots,j_1)$, etc. 
Then modulo a shift of indices
$$
T_{P,Q}=T^{-1}_{Q^*, P^*} \quad {\rm and }\quad T_{J,I}=T^{-1}_{I^*, J^*}\,, 
$$
where $T^{-1}$ stands for the inverse pentagram map on twisted polygons in $\RP^d$, see  \cite[Theorem 1.4]{khesin2016}. 
In particular, this implies that the maps $T_{P,Q}$ and $T_{Q^*, P^*}$ are integrable or nonintegrable simultaneously.

Finally, note that the definition of $d$-tuples $P$ and $Q$ in long-diagonal maps allows their 
inversion: the tuples $P^*$ and $Q^*$ have the same form, since arithmetic $m$-progressions and their unions remain such when read backwards.
It follows that, whenever convenient,  to study properties of such maps one may consider the dual long-diagonal maps
$T_{Q,P}$, where the  $d$-tuple $Q$, defining the diagonal hyperplane, is the union of two $m$-progressions, while the intersection
$d$-tuple is a single $m$-progression. We will need this duality to study their continuous limits.
\end{remark}

\medskip


\section{Continuous limit of long-diagonal pentagram maps}\label{sec:cont}

\begin{theorem}\label{thm:contlim}
The continuous limit of all long-diagonal/bi-diagonal maps in $\RP^d$ is equivalent to the $ (2,d+1)$-KdV equation, generalizing the Boussinesq
equation for $d=2$.
\end{theorem}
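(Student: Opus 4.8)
The plan is to compute the continuous limit of the long-diagonal pentagram map and identify the resulting PDE with the $(2,d+1)$-KdV flow. By Remark \ref{rem:duality} I may replace the long-diagonal map by its dual $T_{Q,P}$, where the diagonal hyperplanes are now spanned by vertices indexed by the \emph{union} of two $m$-progressions $Q=B_-\cup B_+$, and the intersection is over a single $m$-progression $P=(0,m,\dots,m(d-1))$. This dual form is more convenient because the intersection step becomes a clean short-diagonal-type construction, and the essential combinatorial data (the gap between the two progressions $B_\pm$) enters only through the diagonal hyperplane. The first step, then, is to set up the continuous limit in the standard way: replace the twisted polygon $(v_k)$ by a smooth nondegenerate curve $\gamma(x)$ in $\RP^d$, lift it to a curve $\Gamma(x)\in\R^{d+1}$ normalized so that $\det(\Gamma,\Gamma',\dots,\Gamma^{(d)})=1$, and introduce a small parameter $\eps$ so that the discrete index shift $k\mapsto k+1$ corresponds to $x\mapsto x+\eps$. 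The lifted curve satisfies a scalar linear ODE $\Gamma^{(d+1)}+u_{d-1}\Gamma^{(d-1)}+\dots+u_0\Gamma=0$ whose coefficients $u_i(x)$ are the projective differential invariants; these are the dynamical fields.

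Next I would Taylor-expand the defining intersection condition of $T_{Q,P}$ to extract the leading nontrivial correction to $\gamma$. The diagonal hyperplane $\Pi_k$ spanned by $\{\Gamma(x+q_i\eps)\}$ is, in the limit, the osculating hyperplane displaced according to the distribution of the points $q_i$; intersecting $d$ such hyperplanes spaced by $P$ picks out a point that deviates from $\gamma(x)$ at some order $\eps^r$. The key is that the first-order deformation $\tilde\gamma-\gamma$ is proportional to a specific combination of $\Gamma$ and its derivatives, and writing the pentagram map as a flow $\partial_t\Gamma = (\text{correction})$ produces an evolution equation on the invariants $u_i$. I expect the combinatorics of the two progressions $B_\pm$ to collapse in the limit: because both $B_+$ and $B_-$ are $m$-progressions with the same common difference, their continuous contributions differ only by an overall shift, so the two-progression structure contributes the factor that distinguishes the ``$2$'' in $(2,d+1)$-KdV from a generic $(s,d+1)$ flow.

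The cleanest route to identifying the limiting equation is to work at the level of the difference operator rather than the curve, paralleling the discrete refactorization $\tilde\D_-^{-1}\tilde\D_+=\D_+\D_-^{-1}$ of \eqref{refactRelation}. In the continuous limit the difference operator $\D$ becomes the scalar differential operator $L=\partial^{d+1}+u_{d-1}\partial^{d-1}+\dots+u_0$, and the splitting $\D=\D_++\D_-$ becomes a splitting $L=L_++L_-$ where $L_\pm$ are the continuous limits of the two progressions. The Lax evolution \eqref{Lax} then linearizes to a commutator flow $\partial_t L=[P_+,L]$ for an appropriate differential-operator ``square root'' $P$ of $L$; the progressions $A_\pm$ (equivalently $B_\pm$) of the \emph{same} common difference with sizes summing to $d+2$ force $P$ to be the fractional power $L^{2/(d+1)}_+$ (the differential-operator part of $L^{2/(d+1)}$), which is precisely the generator of the $(2,d+1)$-AGD flow. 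I would make this rigorous by Taylor-expanding \eqref{refactRelation} to the first nonvanishing order in $\eps$ and matching the resulting Hamiltonian vector field with the standard AGD $(2,d+1)$ flow as defined via $\partial_t L=[(L^{2/(d+1)})_+,L]$.

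The main obstacle I anticipate is bookkeeping the scaling order at which the pentagram correction first appears and confirming that it is uniform across all admissible progression data, i.e.\ that the gap between $B_+$ and $B_-$ affects only a reparametrization of continuous time and not the form of the limiting operator. Concretely, one must verify that the nonconsecutiveness condition guarantees a nonzero leading coefficient in the Taylor expansion (so the limit is a genuine flow and not degenerate), and that after rescaling $t$ the dependence on $\min A_\pm$, $\max A_\pm$, and $m$ drops out entirely, leaving only the dimension $d$ and the number ``$2$'' of progressions. Establishing this uniformity—that every member of the family $\mathcal A_m$ (equivalently $\mathcal B_m$) yields the \emph{same} continuous limit up to time rescaling—is the crux, and I would handle it by showing that the $\eps$-expansion of the intersection condition depends on the progressions only through a single scalar normalizing factor that can be absorbed into~$t$.
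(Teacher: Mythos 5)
Your overall strategy coincides with the paper's: use the duality $T_{J,I}=T^{-1}_{I^*,J^*}$ of Remark \ref{rem:duality} to trade the long-diagonal map for its inverse/dual, whose \emph{intersection} tuple becomes the equally spaced $(m,\dots,m)$; then compute the continuous limit as the envelope of the family of diagonal hyperplanes, extract the expansion $Z_\eps=G+\eps^2\,C_{d,m,I}\bigl(\partial^2+\tfrac{2}{d+1}u_{d-1}\bigr)G+{\mathcal O}(\eps^3)$, and convert $dG/dt=Q_2G$ into $dL/dt=[Q_2,L]$ via $LG=0$ and an order count on the operator $dL/dt-[Q_2,L]$. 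Your closing worry about uniformity is resolved in the paper exactly as you propose: the progression data enters only through the nonzero constant $C_{d,m,I}$, which is absorbed into a time rescaling, and the time reversal incurred by passing to the inverse map is harmless because the inverse of the $(2,d+1)$-KdV flow is the same flow with $t\to -t$.

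Two cautions. First, what you call the ``cleanest route'' --- passing to the limit directly in the refactorization/Lax relation \eqref{refactRelation} --- is precisely what the paper flags as an open problem in its final remark; the actual proof goes through the geometric envelope computation, not through the operator refactorization, so you should not rely on that route being straightforward. Second, your explanation of where the ``$2$'' in $(2,d+1)$ comes from is mistaken: it is not a consequence of there being \emph{two} progressions (nor of $|A_+|+|A_-|=d+2$), but of the fact that the first nonvanishing term of the envelope expansion occurs at order $\eps^2$ --- the $\eps^1$ term is killed by the normalization $\sum_l k_l=0$ --- and that this $\eps^2$-coefficient equals $\partial^2+\tfrac{2}{d+1}u_{d-1}=(L^{2/(d+1)})_+$ up to a constant. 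Indeed, the paper remarks that non-envelope arrangements of the intersecting hyperplanes can produce \emph{higher} flows of the KdV hierarchy, so the ``$2$'' is a feature of the equally spaced intersection tuple obtained after dualization, not of the bi-progression structure of the diagonal hyperplanes. If you carry out the Taylor expansion honestly this error self-corrects, but as stated it would lead you to look for the decisive cancellation in the wrong place.
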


In the limit as $n\to\infty$  a generic twisted $n$-gon becomes  
a smooth non-degenerate quasi-periodic curve $\gamma(x)$. The limit of 
pentagram maps  is an evolution on such curves constructed as follows. 
Consider the lift of $\gamma(x)$ in $\RP^d$ to a curve 
 $G(x)$ in $\R^{d+1}$  defined by the conditions that the components of the vector function
$G(x)=(G_1,...,G_{d+1})(x)$ provide the homogeneous coordinates for
$\gamma(x)=(G_1:...:G_{d+1})(x)$ in $\RP^d$ and \begin{equation}\label{wronskian}\det(G(x),G^{(1)}(x),...,G^{(d)}(x))=1\end{equation} for all $x\in \R$, where $G^{(i)}$ are consecutive derivatives of the vector-function $G(x)$ with respect to $x$.
Furthermore, $G(x+2\pi)=MG(x)$ for a given $M\in \mathrm{SL}_{d+1}(\R)$. Then $G(x)$
satisfies the linear differential equation of order $d+1$:
$$
G^{(d+1)}+u_{d-1}(x)G^{(d-1)}+...+u_1(x)G^{(1)}+u_0(x)G=0
$$
with periodic coefficients $u_i(x)$, which is a continuous limit of difference equation defining a space $n$-gon. 

Fix a small $\eps>0$ and  let $P$ be any $d$-tuple $P=(p_1,\dots,p_d)$ of increasing integers.
For the $P$-diagonal hyperplane
$$
\Pi_k:= \mathrm{span}\langle v_{k+p_1},  ..., v_{k+p_d} \rangle
$$
its  continuous analogue  is the hyperplane $\Pi_\eps(x)$ passing through
$d$ points $\gamma(x+p_1\eps),...,\gamma(x+p_d\eps)$
of the curve $\gamma$. In what follows we are going to make a parameter shift in $x$
(equivalent to shift of indices) and define
$ \Pi_\eps(x):=\mathrm{span}\langle  \gamma(x+k_1\eps),\gamma(x+k_1\eps),...,\gamma(x+k_{d}\eps)\rangle$, for any real $k_1<k_2<...<k_{d}$  such that $\sum_l k_l=0$.

Let $\zeta_\eps (x)$ be the envelope curve for the family of hyperplanes $\Pi_\eps(x)$  in $\RP^d$ for a fixed $\eps$.
(Geometrically  the envelope can be thought of as the intersection of infinitely close ``consecutive"
 hyperplanes of this family along the curve.)
The envelope condition means that  $\Pi_\eps(x)$ are the osculating hyperplanes of the curve $\zeta_\eps (x)$, that is  the point $\zeta_\eps (x)$ belongs to the hyperplane $\Pi_\eps(x)$, while the vector-derivatives $\zeta^{(1)}_\eps (x),...,\zeta^{(d-1)}_\eps (x)$ span this hyperplane
 for each $x$. It means that the lift of $\zeta_\eps (x)$ in $\RP^d$ to $Z_\eps (x)$ in $\R^{d+1}$
satisfies the system of $d$ equations:
$$
\det ( G(x+k_1\eps), ..., G(x+k_{d}\eps),  Z^{(j)}_\eps(x) )=0,\quad j=0,...,d-1.
$$
Here the lift $Z_\eps (x)$ is again defined by the constraint $\det(Z_\eps(x),Z_\eps'(x),...,Z_\eps^{(d)}(x))=1$
for all  $x\in \R$. One can show that the expansion of the lift $Z_\eps(x)$ has the form
$$
Z_\eps(x)=G(x)+\eps^2 F(x)+{\mathcal O} (\eps^3)\,,
$$
where there is no term linear in $\eps$ due to the condition $\sum_l k_l=0$.

\begin{definition}
A continuous limit of the pentagram map  is  the evolution of the curve $\gamma$ in the direction of the envelope $\zeta_\eps$, as $\eps$ changes: ${d}G/{dt} =F$.
More explicitly, the lift $Z_\eps(x)$  satisfies the family of differential equations:
$$
Z_\eps^{(d+1)}+u_{d-1,\ep}(x)Z_\eps^{(d-1)}+...+u_{1,\ep}(x)Z_\eps^{(1)}+u_{0,\ep}(x)Z_\eps=0, 
$$
where $Z_0(x)=G(x)$, i.e. $ u_{j,0}(x)=u_{j}(x)$.
Then the corresponding expansion of the coefficients $u_{j,\ep}(x)$ as $u_{j,\ep}(x)=u_{j}(x)+\ep^2w_j(x)+{\mathcal O}(\ep^3)$,
defines the continuous limit of the pentagram map as the system of evolution differential equations $du_j(x)/dt\, =w_j(x)$ for $j=0,...,d-1$. 
\end{definition}

This definition of limit via an envelope assumes that we are dealing with consecutive hyperplanes in the pentagram map, i.e. the intersection sequence $Q=(0,1,...,d-1)$ is comprised of consecutive integers. 
The  corresponding intersection tuple is $ I={\mathbf 1}:=(1,...,1)$ or can be taken as its multiple. 
One also obtains the envelope by intersecting hyperplanes with step $\delta$ and taking limit $\delta \to 0$ by formally passing from difference operators to the differential ones, see \cite{nackan2020continuous}.

We start with reminding the following theorem, a variation of a result from \cite{khesin2016} or \cite{nackan2020continuous}, which is the main ingredient of the proof of Theorem \ref{thm:contlim}.

\begin{remark}
Note that while long-diagonal maps are related with equally spaced jump tuple $J=(m,...,m)$ (and an appropriate 
intersection tuple $I$), the envelope 
is related to an  equally spaced intersection tuple $I=(m,...,m)$ (and any $J$). These two cases are dual 
(and hence inverses) of each other, thanks to a key property of the pentagram maps:  $T_{I,J}^{-1}$ coincides with the map $T_{J^*,I^*}$  (modulo shift of indices), see Remark \ref{rem:duality} above and  \cite{khesin2016}.
\end{remark}

\begin{theorem}{\rm (cf. \cite{khesin2016})}\label{thm:cont}
The continuous limit of any generalized pentagram map $T_{ I,J}$ for any $ J=(i_1,...,i_{d-1})$ and 
$ I=m{\mathbf 1}$
(and in particular, of the  inverse of any long-diagonal/bi-diagonal pentagram map) in dimension $d$ defined by the system $du_j(x)/dt\, =w_j(x), \, j=0,...,d-1$ for $x\in S^1$  is the $(2, d+1)$-KdV flow of the Adler-Gelfand-Dickey  hierarchy on the circle.
\end{theorem}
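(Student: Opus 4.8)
The plan is to localize the whole statement in a single identification: that the second-order envelope correction $F(x) = \lim_{\eps \to 0}\eps^{-2}(Z_\eps(x) - G(x))$ equals, up to a constant, the operator $(L^{2/(d+1)})_+$ applied to $G$, where $L := \partial^{d+1} + u_{d-1}\partial^{d-1} + \dots + u_0$ is the differential operator with $LG = 0$ and $(\,\cdot\,)_+$ denotes the differential (non-negative) part of a fractional power. Granting this, the theorem follows from the standard Adler--Gelfand--Dickey principle: if the kernel curve evolves by $dG/dt = PG$ for a differential operator $P$, then differentiating $L_tG_t = 0$ gives $\dot L\,G + L\,\dot G = 0$, and substituting $\dot G = PG$ together with $LG = 0$ turns this into $\dot L\,G = -LPG = [P,L]\,G$. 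For $P = (L^{2/(d+1)})_+$ the bracket $[P,L]$ has order at most $d-1$ --- the defining feature of an Adler--Gelfand--Dickey vector field --- so it has the same shape as $\dot L = \sum_{j=0}^{d-1} w_j \partial^j$; comparing coefficients in the independent frame $G, G', \dots, G^{(d-1)}$ then propagates the equality on $G$ to $\dot L = [P,L]$, which is exactly the $(2,d+1)$-KdV flow $\partial_t L = [(L^{2/(d+1)})_+, L]$. By Remark~\ref{rem:duality} the present case $I = m\mathbf 1$ is dual, hence inverse, to the long-diagonal maps with $J = m\mathbf 1$, so the same computation delivers their continuous limit after the harmless time reversal produced by inversion.

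Before computing $F$ I would settle the claimed $J$-independence by a scaling argument. Rescaling the nodes $k_l \mapsto \lambda k_l$ is the same as rescaling $\eps \mapsto \lambda\eps$ in the hyperplane $\Pi_\eps(x) = \mathrm{span}\langle G(x+k_l\eps)\mid l = 1,\dots,d\rangle$, so the family, and hence its envelope, satisfies $Z_\eps[\lambda k] = Z_{\lambda\eps}[k]$; consequently the coefficient of $\eps^r$ in $Z_\eps$ is homogeneous of degree $r$ in the nodes and, being insensitive to their order, symmetric. Thus the linear-in-$\eps$ coefficient is proportional to $\sum_l k_l$ and vanishes (recovering the absence of an $\eps$-term noted above), while $F$, the $\eps^2$-coefficient, is proportional to the only remaining degree-two symmetric function $\sum_l k_l^2$. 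Since changing the jump tuple $J$ only changes the nodes, it merely rescales $F$, hence the time variable; the underlying flow is therefore the same for every $J$, and it suffices to compute the universal factor for one convenient configuration, e.g. the equally spaced one treated in \cite{khesin2016, nackan2020continuous}.

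To extract that universal factor I would run the envelope expansion in two stages. First the covector $\phi_\eps(x) \in (\R^{d+1})^*$ with $\phi_\eps \cdot G(x+k_l\eps) = 0$: writing $c_r := \phi_\eps \cdot G^{(r)}(x)$, Taylor-expanding $G(x+k_l\eps)$, and closing the system on $(c_0,\dots,c_d)$ via the relations $c_{d+1} = -u_{d-1}c_{d-1} - \dots - u_0 c_0$ coming from $LG = 0$, one normalizes $c_d = 1$ and finds $\phi_\eps = \phi + \eps^2\psi + O(\eps^3)$ with $\phi$ the osculating covector and $\psi$ linear in the $u_j$. Second, the determinant conditions $\det(G(x+k_1\eps),\dots,G(x+k_d\eps),Z_\eps^{(j)}) = 0$, $j = 0,\dots,d-1$, are equivalent to the osculating system $\phi_\eps^{(i)} \cdot Z_\eps = 0$, $i = 0,\dots,d-1$, which pins $Z_\eps$ down up to the Wronskian scaling. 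Solving it order by order against the frame $\{G,G',\dots,G^{(d)}\}$ expresses $F$ as a second-order differential expression in $G$, and one finds $F = c\,(G'' + \tfrac{2}{d+1}u_{d-1}G)$ modulo a genuine element of $\ker L$ (the symmetry $\sum_l k_l = 0$ forcing the absence of a $G'$ term). Adding a solution of $L\psi = 0$ to $F$ alters neither $\dot L$ nor the $w_j$, so this is precisely $F = c\,(L^{2/(d+1)})_+ G$, because $(L^{2/(d+1)})_+ = \partial^2 + \tfrac{2}{d+1}u_{d-1}$ --- the vanishing of the $\partial^{d}$-coefficient of $L$ (Wronskian normalization) killing the constant term of $L^{1/(d+1)}$ and hence the $\partial^1$-term of its square.

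The step I expect to be the main obstacle is exactly this last matching of constants: confirming that the $\eps^2$-coefficient of the envelope carries the specific term $\tfrac{2}{d+1}u_{d-1}$ of $(L^{2/(d+1)})_+$, and not $\partial^2$ alone or $\partial^2$ with some other multiple of $u_{d-1}$. This coefficient is what distinguishes the $(2,d+1)$-KdV flow from other second-order flows, and it rests on a careful order-by-order inversion of the Wronskian frame together with tracking how $LG = 0$ feeds $u_{d-1}$ into the subleading envelope term. Once the constant is verified, the first paragraph converts $dG/dt = F$ into $\partial_t L = [(L^{2/(d+1)})_+, L]$, the asserted $(2,d+1)$-KdV flow of the Adler--Gelfand--Dickey hierarchy.
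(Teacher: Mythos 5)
Your proposal follows essentially the same route as the paper: the paper likewise asserts the envelope expansion $Z_\eps = G + \eps^2 C_{d,m,I}\bigl(\partial^2 + \tfrac{2}{d+1}u_{d-1}\bigr)G + \mathcal{O}(\eps^3)$ (deferring its verification to \cite{khesin2016, nackan2020continuous}) and then derives $dL/dt = [Q_2, L]$ by differentiating $LG = 0$ and noting that $dL/dt - [Q_2,L]$, being of order at most $d$, must vanish since it annihilates the $d+1$ independent components of $G$. Your added homogeneity/symmetry argument for $J$-independence of the $\eps^2$-coefficient and your sketch of the covector computation are reasonable supplements to the step the paper handles by citation, and you correctly identify the verification of the coefficient $\tfrac{2}{d+1}u_{d-1}$ as the one piece neither you nor the paper carries out in full.
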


\begin{remark}
Recall that the $(k, d+1)$-KdV flow is defined on  linear differential operators
 $L= \partial^{d+1} + u_{d-1}(x) \partial^{d-1} + u_{d-2}(x) \partial^{d-2} + ...+ u_1(x) \partial + u_0(x)$  of order $d+1$ with periodic coefficients $u_j(x)$, where $\partial^{l}$ stands for $d^l/dx^l$.
One can define the fractional power
$L^{k/{d+1}}$ as a pseudo-differential operator for any positive integer $n$ and take
its pure differential part  $Q_k :=(L^{k/{d+1}})_+$. In particular, for $k=2$ one has 
$$
Q_2= \partial^2 + \dfrac{2}{d+1}u_{d-1}(x). 
$$ 
Then the $(k, d+1)$-KdV equation is the  evolution equation on (the coefficients of) $L$ given by $dL/dt= [Q_k,L] $, see \cite{Adler}. For $k=2$ this gives the $(2, d+1)$-KdV system 
\begin{equation}\label{eq:kdv}
\frac{dL}{dt}= [Q_2,L]:= \left[ \partial^2 + \dfrac{2}{d+1}u_{d-1}(x), L\right].
\end{equation}
 For $d=2$ and $k=2$ the  (2,3)-KdV system gives evolution equations on the coefficients $u$ and $v$ of the operator 
 $L=\partial^3+ u(x) \partial + v(x)$. Upon elimination of $v$ this reduces to the classical Boussinesq equation  on the circle: $$u_{tt}+2(u^2)_{xx}+u_{xxxx}=0,$$ which appears as the continuous limit of the 2D pentagram map \cite{ovsienko2010pentagram}.
\end{remark}

\begin{proof}[Proof of Theorem \ref{thm:cont}]
The proof is based on the  expansion of the envelope $Z_\eps(x)$  in the parameter $\eps$: one can show that 
$$
Z_\eps(x)=G(x)+\eps^2 C_{d,m, I}\left(\partial^2+ \dfrac{2}{d+1}u_{d-1}(x)  \right)G(x)+{\mathcal O} (\eps^3)
$$ 
as $ \eps\to 0$,  for a certain non-zero constant $C_{d,m, I}$. 
This gives the following evolution of the curve $G(x)$ given by the
$\eps^2 $-term of this expansion:
$$\frac{dG}{dt} = \left(\partial^2+ \dfrac{2}{d+1}u_{d-1}\right)G,$$ or which is the same,
${d}G/{dt} =Q_2G$.

To find the evolution of the  differential operator $L$  tracing it recall that for any $t$, 
the curve $G$ and the operator $L$ are related by the differential equation $LG=0$.
In particular,
$d(LG)/dt=({d}L/{dt}) G +  L ({d}G/{dt})=0$, which, in view of ${d}G/{dt} =Q_2G$, implies
$$
\left(\frac{dL}{dt} + LQ_2\right)G = 0,
$$
and hence
$$
\left(\frac{dL}{dt} - [Q_2, L]\right)G = 0,
$$
where $[Q_2, L]:=Q_2L-LQ_2$. But ${dL}/{dt} - [Q_2, L]$ is an operator of order $\leq d$, so it can only annihilate the vector-function $G(x) \in \R^{d+1}$ if $L$ satisfies the $(2,d+1)$-KdV equation 
$$
\frac{dL}{dt} = [Q_2, L].
$$
which proves Theorem \ref{thm:cont}.
\end{proof}

\begin{proof}[Proof of Theorem \ref{thm:contlim}]
By definition, any long-diagonal map has jump tuple $J=(m,...,m)$
and a certain intersection tuple $I$. Now pass to the inverse map. For such a  map the new jump tuple is
$\bar J=I^*$, and the new  intersection tuple is $\bar I=J^*=(m,...m)$. 
According to Theorem \ref{thm:cont} the continuous limit for pentagram maps $T_{\bar I,\bar J}$  in $\RP^d$ with such
$\bar I$ and $\bar J$ (including the inverses of long-diagonal maps) is equivalent to the  $(2,d+1)$-KdV equation.

Finally, note that the inverse of Equation \eqref{eq:kdv} is the same differential equation with the reversed time variable.
Thus the continuous limit of the long-diagonal pentagram maps is given by the same
$(2,d+1)$-KdV equations upon the changing time $t\to -t$, which we treat on equal footing with the original KdV flows. 
\end{proof}

\begin{remark}
Intersections of nonconsecuitive hyperplanes or hyperplanes depending on their indices the continuous limit might not be an envelope and it can be arranged more arbitrarily. For instance, for some special choices, one can obtain higher equations of the KdV hierarchy, see \cite{MB, nackan2020continuous}.
In our case the continuous limit turned out to be a familiar system thanks to the regular structure 
of the diagonal hyperplane $J=(m,...,m)$.
\end{remark}

\begin{remark}
It would be interesting to obtain the $(2,d+1)$-KdV equation as the continuous limit directly from the Lax form \eqref{Lax} for
the pentagram maps, by formally passing to the limit from the linear difference equations defining polygons  to the linear
differential equations defining curves in   $\R^{d+1}$, cf.~\cite{nackan2020continuous}.

\end{remark}

\bibliographystyle{plain}
\bibliography{pent.bib}

\end{document}